\definecolor{ultramarine}{rgb}{0.07, 0.04, 0.56}
\definecolor{darkspringgreen}{rgb}{0.09, 0.45, 0.27}
\newcommand{\tab}[2]{
  
  \begin{table}[htb]
    \centering
    \input{tables/#1}
    \caption[\protect\detokenize{#1}]{#2}
    \label{tab:#1}
  \end{table}

}
\newcommand{\tabf}[2]{
  
  \begin{table}[p]
    \centering
    \input{tables/#1}
    \caption[\protect\detokenize{#1}]{#2}
    \label{tab:#1}
  \end{table}

}
\newcommand{\refsec}[1]{Section~\ref{#1}}
\newcommand{\reftab}[1]{Table~\ref{tab:#1}}
\newcommand{\reffig}[1]{Figure~\ref{#1}}
\def\fixaccepted#1{{\color{magenta} #1}}
\definecolor{EW}{HTML}{E41A1C}
\definecolor{GMV}{HTML}{377EB8}
\definecolor{GMV_lin}{HTML}{4DAF4A}
\definecolor{GMV_long}{HTML}{984EA3}
\definecolor{GMV_nlin}{HTML}{FF7F00}
\definecolor{GMV_{robust}}{HTML}{A65628}
\definecolor{Index}{HTML}{999999}
\newcommand\myeq{\overset{\mathrm{def}}{=}}
\newcommand{\quantlet}[2]{\hspace*{\fill}
  \raisebox{-1pt}{\includegraphics[scale=0.05]{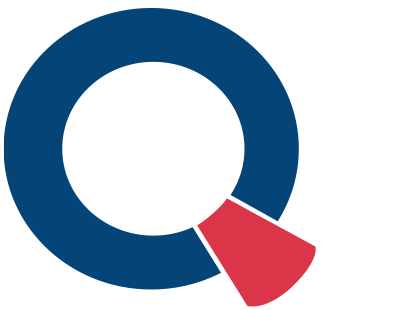}
    \href{https://github.com/QuantLet/RobustM/tree/master/#1}{#2}}
}
\title{Robustifying Markowitz}
\author{Wolfgang Karl H\"ardle\footnote{\scriptsize Blockchain Research Center, Humboldt-Universit\"at zu Berlin, Germany.  Wang Yanan Institute for Studies in Economics, Xiamen University, China.  Sim Kee Boon Institute for Financial Economics, Singapore Management University and Asia Competitiveness Institute (ACI), National University of Singapore (NUS),  Singapore. Faculty of Mathematics and Physics, Charles University, Czech Republic.   Yushan Scholar 
National Yangming Jiaotong University, Taiwan.
\url{haerdle@hu-berlin.de}},
Yegor Klochkov\footnote{\scriptsize ByteDance AI Lab. This work was done when this author was at Janeway Institute, Faculty of Economics, University of Cambridge \url{yk376@cam.ac.uk}}, Alla Petukhina\footnote{\scriptsize Hochschule f\"ur Technik und Wirtschaft Berlin, School of Computer Science, Communication and Economics   \url{alla.petukhina@htw-berlin.de}}, Nikita Zhivotovskiy\footnote{\scriptsize Department of Statistics, UC Berkeley. This work was done when the author was at the Department of Mathematics, ETH Z\"urich,  \url{zhivotovskiy@berkeley.edu}}}
\date{\today}
\def\P{\mathsf{P}}
\def\E{\mathsf{E}}
\def\R{\mathbb{R}}
\def\T{\top}
\def\onev{\mathbf{1}}
\newtheorem{theorem}{Theorem}[section]
\newtheorem{proposition}{Proposition}[section]
\newtheorem{lemma}{Lemma}[section]
\newtheorem{corollary}{Corollary}[section]
\newtheorem{assumption}{Assumption}[section]
\newtheorem*{theorem*}{Theorem}
\newtheorem{definition}{Definition}[section]
\newtheorem{remark}{Remark}[section]
\begin{document}
\sloppy
\maketitle

\begin{abstract}
  \small
  Markowitz mean-variance portfolios with sample mean and covariance as input parameters feature numerous issues in practice. They perform poorly out of sample due to estimation error, they experience extreme weights together with high sensitivity to change in input parameters. The heavy-tail characteristics of financial time series are in fact the cause for these erratic fluctuations of weights that consequently create substantial transaction costs. In robustifying the weights we present a toolbox for stabilizing costs and weights for global minimum Markowitz portfolios. Utilizing a projected gradient descent (PGD) technique,  we avoid the estimation and inversion of the covariance operator as a whole and concentrate on robust estimation of the gradient descent increment. Using modern tools of robust statistics we construct a computationally efficient estimator with almost Gaussian properties based on median-of-means uniformly over weights. This robustified Markowitz approach is confirmed by empirical studies on equity markets. We demonstrate that robustified portfolios reach the lowest turnover compared to shrinkage-based and constrained portfolios while preserving or slightly improving out-of-sample performance.
\end{abstract}
\section{Introduction}

The cornerstone mean-variance portfolio theory proposed by \cite{Markowitz1952} plays a significant role in research and practice. Efficient mean-variance portfolios (MV) experience a number of attractive properties and have a simple and  straightforward analytical solution with only two input parameters: the expected mean and covariance matrix of asset returns. 
Mean-variance analysis is naturally connected to the Capital Asset Pricing Model (CAPM), a standard tool in asset pricing.

Despite its simplicity and theoretical appeal, implementation of mean-variance portfolios is often impractical. The traditional approach to use the sample moments as input parameters leads to extreme negative and positive weights, and extensive literature documents poor out-of-sample performance of such plug-in approach, see \citet{frost1986empirical, Frost29, Best1991, Chopra1993, broadie1993computing, litterman2004modern, merton1980estimating}. 
{The problem might be seen as an inverse problem, and it} has high sensitivity to even small perturbations of the input estimates: the mean and {the} covariance matrix. 
It is possibly surprising that the MV portfolios are more 
sensitive to changes in the mean estimate, 
\cite{jagannathan2003risk} spell this out explicitly by writing that the error of mean estimation is so large that nothing is lost when one ignores the mean at all, and \cite{michaud1989markowitz} describes the influence of the mean error as
``error-maximization''. 
{Following the majority of research on this topic, }
we focus here on \emph{global minimum variance} portfolios (GMV), which only depend on the covariance.

However, even with the mean left out of the equation, traditional policies suffer from extreme instability, which means that the portfolio weights fluctuate significantly over time.  Drastic changes in the portfolio composition lead to increasing management and transaction costs and consequently reducing the popularity of MV policies  among investors.
In order to improve upon the stability of portfolio weights, one has to resort to alternative, robust estimation techniques. A \emph{robust estimator} is one that performs well even when the observations do not follow the standard (normality) assumptions, have heavy tails, or are even subject to contamination. Although in case of normal distributions, sample moment estimators are asymptotically optimal MLEs, they are not necessarily the best choice when the data deviates from normality \citet{huber2004robust}. This is of particular importance in financial applications, where it is well known that the data is not only non-Gaussian, but also exhibits heavy tails.

To tackle the problem of heavy tails, \cite{demiguel2009portfolio} construct a portfolio optimization procedure based on M- and S-estimation technique and analyze the stability of the estimator analytically; they also demonstrate empirically that their approach reduces portfolio \emph{turnover}, whereas it slightly improves the out-of-sample performance. \cite{fan2019robust} {construct} an elementwise covariance estimator through an M-estimation procedure with Huber loss, providing statistical high-probability guarantees. 
Robust portfolio optimization problem  has gained significant attention, \cite{xidonas2020robust} categorize 148 {studies} conducted during the
last 25 years and focused on this topic. 

Failures of MV portfolios become even more pronounced with a growing investment universe, especially for cases when a sample size is  less than the number of assets. Evidence was investigated by \cite{ kan2007optimal}, \cite{ Bai2009}, \cite{ Karoui2012}, and \cite{Chen2016}. To overcome this curse of dimensionality, structured covariance matrix estimators are proposed for asset return data.  \cite{fan2008high} considered estimators based on factor models with observable factors. \cite{stock2002forecasting}, \cite{bai2012statistical},  \cite{fan2013large} studied covariance matrix estimators based on latent factor models. \cite{ledoit2003improved}, \cite{ledoit2004well}, \cite{ledoit2004honey} proposed linear and \cite{ledoit2017nonlinear} non-linear shrinkage of sample eigenvalues. These estimators are commonly based on the sample covariance matrix, and sub-Gaussian tail assumptions are required to guarantee sharp non-asymptotic bounds (see Section 4.7 in \citet{vershynin2018high}).

The goal of our robustifying Markowitz approach is stabilizing portfolio weights by solving two problems at the same time: How to optimize the GMV portfolio when the dimension is possibly higher than the sample size and the distribution of the returns has heavy tails? Moreover, even if the returns are not heavy-tailed, how can one avoid the usual Gaussian assumption in the theoretical analysis?

Our theoretical and algorithmic contributions dwell on some recent breakthroughs in statistical literature with regard to robust estimation. \cite{lugosi2019sub} constructed a multivariate mean estimator based on the idea of median-of-means that dates back to \cite{nemirovsky1983problem}. The remarkable property of their estimator is that it pertains favorable properties of the Gaussian sample mean: it allows deviation bounds with high probability without much loss in the accuracy. Their only condition is that the second moment of each component of the random vector is bounded, which is the minimal possible condition to have a square-root convergence, even on average. However, their original estimator was not computationally tractable and in the past years the problem has attracted a lot of attention. \cite{hopkins2018sub} first proposed an estimator with polynomial computational (efficient) complexity, and subsequent research led to nearly linear-time algorithms \citet{depersin2019robust, hopkins2020robust}, thus making practical applications possible. As for the covariance estimation, \cite{mendelson2020robust} proposed an abstract algorithm that achieves performance of Gaussian sample covariance  estimator under four bounded moments assumption. So far, it remains an open question whether such performance can be achieved with a polynomial algorithm, with some conjecturing that the answer is no, \citet{cherapanamjeri2020algorithms}. 

We bypass these algorithmic problems appearing in robust estimation of the covariance matrix. In fact, our approach does not require estimating the covariance operator directly. It is based on a simple iterative gradient descent, that requires estimating only the action of the covariance operator on a current approximation at each step.

Our contribution to robustifying Markowitz is threefold:
\begin{itemize}
\item 
Based on the algorithm from \citet{hopkins2020robust}, we introduce a robust and computationally tractable algorithm that achieves nearly Gaussian performance under only four moments assumption on the distribution of the return vector. This means, in particular, that the estimator works with almost any distribution with four bounded moments as good as it works with Gaussian data. 

\item We provide theoretical guarantees for our method in two cases. In the first case, we assume that the covariance matrix is well-conditioned, which means that the objective of our optimization problem enjoys the strong convexity property, and the convergence guarantees are provided even for the mean-variance objective. However, in that case we require that the dimension of the investment universe ($N$) is much smaller than the size of the sample ($T$). Moreover, the assumption that the covariance is well-conditioned is impractical due to the presence of strong factors in financial panel data, and we provide this result merely out of theoretical curiosity.

In the general case where we have no control over the small eigenvalues of the covariance matrix, we only consider the GMV objective. However, we can take advantage of possibly small \emph{effective rank} of the covariance matrix, which allows the dimension $N$ to be a lot larger than the sample size $T$.

\item In our empirical study, we compare behavior of the proposed portfolio estimator to the traditional portfolio benchmarks on  equity data for two cases: when the size of the sample $T$ is comparable with the dimension $N$, and when $T<N$. For the first case, we consider the S\&P100 data, and for the second case, we take the constituents of the Russell3000 index. In both cases, we conduct study for daily data over the course of 12 months and 24 months, which corresponds to $T = 252$ and $T = 500$. We demonstrate that our approach enjoys more stable weights than the traditional portfolios, while preserving (or slightly improving) their  out-of-sample performance.
\end{itemize}

Let us also recall a well known hypothesis of \cite{Green1992} that extreme portfolio weights appear not entirely due to high estimation errors, but rather due to the population optimal portfolios themselves having extreme weights and being poorly diversified. Specifically, they show that asset returns generated by a model with a single dominant factor result in excessive short and long positions. This {leads} to {the study of} restricted portfolio policies. In a seminal work, \cite{jagannathan2003risk} consider portfolios with non-negative constraints. Despite considering a lesser class of portfolios, they demonstrate a better out-of-sample performance. Furthermore,  \cite{fan2012vast} introduce \emph{Gross Exposure Constraints}, which work similarly but allow negative allocation weights.
Contrary to these ideas, we demonstrate that applying our robust procedure leads to desirable properties of weights without any constraints enforced a priori, which contradicts the original hypothesis of Green and Hollifield.

\def\onev{\mathbf{1}}
\def\Tr{\operatorname{Tr}}

\subsection{Notation}
Throughout the paper we write $ a \lesssim b $  and $ b \gtrsim a $ if there is a constant $C$ such that $ a \leq C b $. If we have both $ a \lesssim b $ and $ b \lesssim a $, we write $ a\sim b $.

For a vector $ x \in \R^{d} $, we denote by $ \| x \| = \sqrt{x_1^2 + \dots + x_d^2} $ its Euclidean norm. If $ A $ is a matrix, we denote $ \| A \| = \sup_{u, v \in \mathbb{S}^{d - 1}} u^{\T} A v $ its spectral norm, where $\mathbb{S}^{d - 1}$ is a sphere in $\R^{d}$. If $ A \in \R^{d \times d} $ is symmetric, we write $ \lambda_1(A) \geq \lambda_2(A) \geq \dots \geq \lambda_{d}(A) $ to denote {its} eigenvalues in descending order. We also denote $ \lambda_{\max}(A) = \lambda_{1}(A) $ and $ \lambda_{\min}(A) = \lambda_{d}(A) $ --- its largest and smallest eigenvalues, respectively. In particular, we have that $ \| A \| = \max\{ | \lambda_{\max}(A)|, |\lambda_{\min}(A)|\} $. We say that a symmetric matrix $ A $ is positive semi-definite (PSD) if $ v^{\T} A v \geq 0 $ for all $v$. We also write $ A \preceq B $ and $ B \succeq A $ if $ B - A  $ is a PSD {matrix}. Furthermore, $I$ denotes the identity matrix whose dimensions are clear from the context. We denote $ \mathbf{1} = (1, \dots, 1)^{\T} $ of dimension $N$, so that $ w^{\T} \mathbf{1} = \sum_{i = 1}^{N} w_i $.

For a PSD matrix $A \in \R^{d \times d}$ we {denote its \emph{effective rank} by
\[
\mathbf{r}(A) = \operatorname{Tr}(A) / \lambda_{\max}(A) = \sum_{j = 1}^{d} \lambda_{j}(A) / \lambda_{\max}(A).
\]
The effective rank is clearly always smaller than both the dimension $d$ and the matrix rank of $A$. 
This quantity plays an important role in covariance estimation problems. In particular, it was shown by \cite{koltchinskii2017concentration} that in the Gaussian case, the performance of the sample covariance matrix is governed by the effective rank of the covariance matrix and is not sensitive to a potentially larger dimension of the ambient space.}

\section{Mean-Variance and Global Minimum Variance portfolios}
\label{sec:mvgmv}

Suppose we have an opportunity to invest into $N$ assets and $ r_{1}, \dots, r_{N} $ denote their log-returns. Let $ X = (r_1, \dots, r_{N})^{\T} $ be the multivariate return vector with mean $ \mu $ and covariance $\Sigma$. Then a portfolio with allocation weights $w = (w_1, \dots, w_N)^{\T}$ has returns with expectation $ \mu^{\T} w $ and  variance $ w^{\T} \Sigma w $.

One of the fundamental portfolio policies, the \emph{mean-variance}  portfolio (MV), is based on maximizing the utility
\begin{equation*}
    M_{\gamma}(w; \mu, \Sigma) = \mu^{\T} w - \frac{\gamma}{2} w^{\T} \Sigma w \qquad \text{subject to} \;\;  w^{\T} \onev  = 1 \, ,
\end{equation*}
which takes as input the mean $ \mu $ and the covariance operator $\Sigma$. Moreover, $ \gamma $ is a fixed \emph{risk aversion}  parameter provided by the investor.
The quadratic term in the above expression represents the variance of the portfolio return $ \operatorname{Var}(w^{\T} X) $, and the linear term is its mean ${\E} (w^{\T} X)$.

{Some researchers} often discard the dependence on mean and concentrate on an alternative portfolio policy that minimizes the risk measure
\begin{equation}\label{GMV_opt}
    R(w; \Sigma) = \frac{1}{2} w^{\T} \Sigma w \qquad \text{subject to }  w^{\T} \onev  = 1  ,
\end{equation}
which corresponds to finding a \emph{global minimum variance} portfolio (GMV).  The quantity $ w^{\T} \Sigma w = \operatorname{Var}(w^{\T} X) $ is often regarded as \emph{risk} of a portfolio allocation $w$ in the financial literature.

Suppose we have an i.i.d. sample $ X_1, \dots, X_T $ that comes from a distribution with mean $\mu$ and covariance $\Sigma$. Our goal is to construct portfolio allocation weights $\hat{w}$ that are as close to optimum as possible. Below we provide theoretical high-probability guarantees in terms of the gap between the estimator and the population optimal solution, that is, 
\[
    R(\hat{w}; \Sigma) - \min_{w^{\T}\onev} R({w}; \Sigma)
\]
in the GMV case, or
\[
    \max_{w^{\T}\onev} M_{\gamma}({w}; \mu, \Sigma) - M_{\gamma}(\hat{w}; \mu, \Sigma)
\]
for the mean-variance portfolio. Notice that both of these entities are non-negative.

We analyze two different situations, focusing on high-dimensional non-asymptotic bounds. 
Firstly, we consider the hypothetical situation where the covariance matrix $\Sigma$ is well-conditioned, that is when the condition number $\kappa \myeq \lambda_{\max}(\Sigma) / \lambda_{\min}(\Sigma)$ is constant. Such situation is unlikely in practice, and we present the following result mainly out of theoretical curiosity. 
{
\begin{theorem*}[A simplified statement]
Suppose that $ \Sigma $ is well-conditioned, i.e., its condition number is bounded by an absolute constant. Fix $\delta \in (0, 1)$. There is a computationally efficient estimator $\hat{w}_{\delta}$ that satisfies, with probability at least $1 - \delta$,
\[  
    \max_{w^{\T}\mathbf{1} = 1} M_{\gamma}(w; \mu, \Sigma) -  M_{\gamma}(\hat{w}_{\delta}; \mu, \Sigma) \lesssim \|\Sigma\|\frac{N \log N + \log(1/\delta)}{T} \,, 
\]
even when the distribution is non-Gaussian and has heavy tails.
\end{theorem*}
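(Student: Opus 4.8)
The plan is to recast the constrained maximization of $M_{\gamma}(\cdot; \mu, \Sigma)$ as a projected gradient ascent on the affine feasible set $\mathcal{W} = \{w : w^{\T}\onev = 1\}$ and to track how errors in the gradient propagate to the objective gap. First I would parametrize $\mathcal{W}$ by a base point plus a displacement in $\onev^{\perp}$, so that $M_{\gamma}$ becomes a quadratic that is $m$-strongly concave with $m \sim \gamma\lambda_{\min}(\Sigma)$ and $L$-smooth with $L \sim \gamma\lambda_{\max}(\Sigma) = \gamma\|\Sigma\|$; well-conditioning forces $L \sim m$, so the idealized iteration (with step size $\eta \sim 1/(\gamma\|\Sigma\|)$ and projection onto $\mathcal{W}$) converges geometrically. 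The gradient is the \emph{affine} map $\nabla M_{\gamma}(w) = \mu - \gamma\Sigma w$. A standard inexact-ascent argument then shows that if a gradient oracle achieves $\sup_{w}\|\hat g(w) - \nabla M_{\gamma}(w)\| \le \varepsilon$ over the region the iterates occupy, the method stabilizes in a neighborhood of $w^{\star}$ with suboptimality $\lesssim \varepsilon^{2}/m \sim \varepsilon^{2}/(\gamma\|\Sigma\|)$. This reduces the whole theorem to building a good uniform gradient estimator.

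For the oracle I would estimate the two pieces of $\mu - \gamma\Sigma w$ separately by the median-of-means construction, instantiated through the near-linear-time algorithm of \citet{hopkins2020robust} to secure computational efficiency. The mean $\mu$ is estimated by applying the robust mean estimator to the sample $X_1, \dots, X_T$. The curvature term $\Sigma w$ is the mean of the symmetrized product vectors $\tfrac12(X_i - X_j)\langle X_i - X_j, w\rangle$, whose expectation equals $\Sigma w$ without reference to $\mu$. The key point is that the \emph{second} moment of such a product is a \emph{fourth} moment of $X$, so the paper's four-moments hypothesis is exactly what makes the covariance of these vectors bounded and thus what lets median-of-means attain its sub-Gaussian (almost Gaussian) accuracy on heavy-tailed data.

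The main obstacle is upgrading the per-$w$ guarantee to a bound that holds \emph{uniformly over weights}, which is what allows the same sample to be reused at every iteration. For a fixed $w$ the estimator attains the sub-Gaussian rate $\sqrt{\Tr(\mathrm{Cov})/T} + \sqrt{\|\mathrm{Cov}\|\log(1/\delta)/T}$; under four moments the product vector has $\Tr(\mathrm{Cov}) \lesssim N\|\Sigma\|^{2}\|w\|^{2}$, contributing the $\sqrt{N/T}$ scaling, while covering the bounded feasible region by a net of cardinality $e^{O(N\log N)}$ and union-bounding replaces $\log(1/\delta)$ by $N\log N + \log(1/\delta)$. The delicate part is that a uniform operator-norm recovery of $\Sigma$ is precisely the task conjectured to be computationally hard under four moments \citep{cherapanamjeri2020algorithms}; the argument must therefore exploit that only the \emph{action} of $\Sigma$ along the directions actually visited by the iterates is required, so a covering of a bounded set combined with the per-direction median-of-means bound suffices, rather than full covariance estimation.

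Finally I would assemble the pieces. The uniform bound yields $\varepsilon \lesssim \gamma\|\Sigma\| R\sqrt{(N\log N + \log(1/\delta))/T}$, where $R$ controls the norm of the iterates (bounded via the constraint $w^{\T}\onev = 1$ and strong concavity). Substituting into the inexact-ascent estimate gives suboptimality $\lesssim \varepsilon^{2}/(\gamma\|\Sigma\|) \sim \gamma\|\Sigma\| R^{2}\,(N\log N + \log(1/\delta))/T$, and converting $\lambda_{\min}(\Sigma)$ to $\|\Sigma\|$ through well-conditioning, together with the bookkeeping of the $\gamma$ and $R$ factors, collapses this to the claimed $\|\Sigma\|\,(N\log N + \log(1/\delta))/T$. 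I expect the strong-concavity reduction and the median-of-means instantiation to be routine; the genuinely hard step is the uniform-over-weights deviation bound and verifying that it can be obtained within a polynomial-time procedure despite the hardness barrier for general covariance estimation.
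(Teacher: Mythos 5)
Your overall architecture matches the paper's: an inexact projected gradient ascent whose error is controlled deterministically (this is the paper's Lemma~\ref{well_cond_convergence}), strong concavity from well-conditioning giving geometric convergence and a suboptimality quadratic in the gradient-oracle error, a robust mean estimator for $\mu$ from \citet{hopkins2020robust}, and estimation of $\Sigma w$ as the mean of pairwise-difference product vectors --- exactly the paper's centering trick $\widetilde X_k = (X_{2k-1}-X_{2k})/\sqrt2$ --- with the $L_4$--$L_2$ equivalence controlling the second moments of these products. Your final bookkeeping is also consistent with the paper, where in the well-conditioned case $\mathbf{r}(\Sigma)\asymp N$ turns the $\mathbf{r}(\Sigma)\log \mathbf{r}(\Sigma)$ factor of Theorem~\ref{well_cond_main_res} into $N\log N$.

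However, your treatment of the crux --- the uniform-over-weights guarantee --- has a genuine gap. A union bound over an $e^{O(N\log N)}$-net controls $\hat a(w)-\Sigma w$ only at net points, while the iterates $w_s$ are data-dependent, so you must interpolate between net points; but a median-of-means/HLZ output is not Lipschitz in $w$ (the selected reweighting $u\in\Delta_{\ell,\varepsilon}$ can jump discontinuously as $w$ varies), so the per-point guarantee does not transfer off the net without further argument. Any quantitative-stability fix requires a pointwise operator-norm bound on the bucket matrices, i.e.\ the truncation $Z_i = X_iX_i^{\T}\mathbf 1[\|X_i\|\le D]$ that your sketch omits --- under only four moments, $\max_i \|X_i\|$ is uncontrolled --- and even then the net mesh must scale like $\Delta_\Sigma/D^2$ with $D^2\sim \|\Sigma\|\sqrt{T\,\mathbf{r}(\Sigma)/\log\mathbf{r}(\Sigma)}$, producing an $N\log T$ rather than $N\log N$ factor. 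The paper avoids nets altogether: using a quantile-concentration lemma from \citet{klochkov2020robust} together with a Rademacher-complexity/matrix-Bernstein bound on $\E\bigl\|T^{-1}\sum_i \varepsilon_i Z_i\bigr\|$ for the truncated products, it shows that with probability $1-\delta$, \emph{simultaneously for all pairs of unit directions} $(u,w)$, most bucket means satisfy $|u^{\T}(\Sigma_j-\Sigma)w|\lesssim \Delta_\Sigma$. This makes $\Sigma w$ a combinatorial center of $\{\Sigma_j w\}$ for every $w$ at once, and the equivalence of combinatorial and spectral centers (Lemma~\ref{lem:spectr_comb}) plus the HLZ optimality guarantee then forces the output $\hat a_\delta(w)$ to be close to $\Sigma w$ for every $w$ on the same event --- no continuity of the estimator in $w$ is ever needed, and the logarithmic factor arises from matrix Bernstein, not from net cardinality.
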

}
The above result demonstrates that the MV portfolio can be robustly estimated as long as the ratio $ N \log N / T$ remains small. It is known that for convergence to the optimal risk one has to have that $ N / T = o(1) $. For instance, \cite{karoui2013realized} considers a ``large $N$, large $T$'' situation where $N/T$ converges to some constant $ \gamma \in (0, 1)$, and shows that there is a constant gap between the realized risks of the empirical and the optimal solutions.  More recently, \cite{bartl2021monte} studied a similar portfolio optimization setup in a well-conditioned case. Although their algorithm is robust with respect to heavy-tailed data and achieves similar rates of convergence, their estimator is not computationally feasible.

Secondly, we consider the case where $\Sigma$ is allowed to be ill-conditioned. {In this case, the condition number $\kappa$ can be large and the effective rank $\mathbf{r}(\Sigma) = \operatorname{Tr}(\Sigma)/\|\Sigma\|$ can be much smaller than $N$.} This case corresponds to a less regular optimization problem and we provide slower convergence guarantees with respect to the number of observations. 
{
\begin{theorem*}[A simplified statement]
There is a computationally efficient estimator $\hat{w}_{\delta}$, such that, with probability at least $1 - \delta$,
\[  
     R(\hat{w}_{\delta}; \Sigma) - \min_{w^{\T}\mathbf{1} = 1} R(w; \Sigma) \lesssim \|\Sigma\|\sqrt{\frac{\mathbf{r}(\Sigma) \log \mathbf{r}(\Sigma) + \log(1/\delta)}{T}} \, ,
\]
even when the distribution is non-Gaussian and has heavy tails.
\end{theorem*}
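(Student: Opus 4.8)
The plan is to recast GMV as inexact projected gradient descent and to reduce the entire problem to robust \emph{mean} estimation of a single random vector, uniformly over the feasible weights. The objective $R(w) = \frac12 w^{\T}\Sigma w$ is convex with gradient $\nabla R(w) = \Sigma w$ and is $\|\Sigma\|$-smooth, so I would run projected gradient descent on the feasible set $\mathcal{W} = \{w : w^{\T}\onev = 1\}$ (intersected with a Euclidean ball of constant radius containing the optimum, to keep $\mathcal{W}$ compact), replacing the unknown gradient $\Sigma w$ at each iterate by a data-driven estimate $\hat g(w)$. The crucial observation is that, after centering, $\Sigma w = \E[\langle X, w\rangle X]$; equivalently, using pairwise differences, $\Sigma w = \frac12\E[\langle X_i - X_j, w\rangle(X_i - X_j)]$, which depends only on $\Sigma$ and sidesteps any estimation of $\mu$. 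Hence computing the gradient at $w$ is exactly the problem of estimating the mean of the random vector $Z_w = \langle X, w\rangle X$, to which I would apply the computationally efficient median-of-means estimator of \citet{hopkins2020robust}, attaining its near-Gaussian (Lugosi--Mendelson) deviation behaviour under only a second-moment assumption on $Z_w$.

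Second, I would control the covariance of $Z_w$ under the four-moments assumption. A direct computation (with the Gaussian case via Wick's formula as a benchmark) gives $\Tr(\operatorname{Cov}(Z_w)) \lesssim (w^{\T}\Sigma w)\Tr(\Sigma)$ and $\|\operatorname{Cov}(Z_w)\| \lesssim (w^{\T}\Sigma w)\|\Sigma\|$. Feeding these into the robust mean-estimation bound yields, for a fixed $w$ with probability $1-\delta$,
\[
\|\hat g(w) - \Sigma w\| \lesssim \sqrt{(w^{\T}\Sigma w)\,\|\Sigma\|\,\frac{\mathbf{r}(\Sigma) + \log(1/\delta)}{T}},
\]
where the effective rank $\mathbf{r}(\Sigma) = \Tr(\Sigma)/\|\Sigma\|$ emerges precisely because the trace term in the estimation rate is $\Tr(\operatorname{Cov}(Z_w))$, reproducing the dimension-free Koltchinskii--Lounici phenomenon rather than an ambient-dimension dependence.

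Third, and this is the main obstacle, I would upgrade this pointwise guarantee to a bound holding uniformly over all iterates $w \in \mathcal{W}$ simultaneously --- the ``median-of-means uniformly over weights'' statement. I expect this to cost only an extra logarithmic factor, turning $\mathbf{r}(\Sigma)$ into $\mathbf{r}(\Sigma)\log\mathbf{r}(\Sigma)$ and producing $\sup_{w\in\mathcal{W}}\|\hat g(w) - \Sigma w\| \lesssim \sqrt{w^{\T}\Sigma w}\,\tau$ with $\tau^2 = \|\Sigma\|\big(\mathbf{r}(\Sigma)\log\mathbf{r}(\Sigma) + \log(1/\delta)\big)/T$. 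The delicate point is that median-of-means estimators are defined pointwise, so making them uniform requires a covering/chaining argument that neither destroys the near-Gaussian tail behaviour (obtained under only four moments) nor reintroduces the ambient dimension $N$ in place of the effective rank. I would build an $\varepsilon$-net of $\mathcal{W}$ at the scale dictated by $\mathbf{r}(\Sigma)$ and control the increments using the bounded-fourth-moment structure of $Z_w - Z_{w'}$.

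Finally, I would feed the uniform, multiplicative gradient error into the convergence analysis of inexact projected gradient descent for a convex (not strongly convex) smooth objective. Because the error scales as $\sqrt{w^{\T}\Sigma w}\,\tau = \sqrt{2R(w)}\,\tau$, the standard inexact-descent inequality becomes self-referential in the suboptimality gap; resolving this self-bounding recursion, while driving the pure optimization error $\|w_0 - w^*\|^2/(\eta K)$ to zero by taking $K$ large, leaves the statistical term dominant. Using that the feasible region has constant diameter and $R(w^*) \lesssim \|\Sigma\|$ (as $w^*$ has bounded norm), the leading term becomes $\sqrt{\|\Sigma\|}\,\tau = \|\Sigma\|\sqrt{(\mathbf{r}(\Sigma)\log\mathbf{r}(\Sigma) + \log(1/\delta))/T}$, the claimed rate. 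The two places demanding the most care are the uniform control of step three and the self-bounding bookkeeping of the last step, where the multiplicative nature of the noise must be tracked to avoid spurious factors of the diameter or of $R(w^*)$.
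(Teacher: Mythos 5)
Your skeleton matches the paper's: inexact projected gradient descent, pairwise-difference centering to remove $\mu$, and robust estimation of the increment $\Sigma w$ via the median-of-means machinery of \citet{hopkins2020robust}. But the step you yourself flag as the main obstacle --- upgrading the pointwise guarantee to one uniform over all weights --- is a genuine gap, and the covering/chaining route you sketch is unlikely to close it. A net of the feasible set lives in $\R^{N}$ and has cardinality exponential in $N$, so a union bound over pointwise MoM guarantees reintroduces the ambient dimension exactly where the result must be dimension-free (recall $N > T$ in the Russell3000 application); and chaining the increments $Z_{w} - Z_{w'}$ under only four moments does not give the sub-Gaussian/sub-exponential increment control that chaining needs, while the HLZ output $\hat g(w)$ is not Lipschitz in $w$ in any obvious sense (the reweighing solution can jump between nearby inputs). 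The paper resolves this structurally rather than by chaining: the bucket statistics are \emph{trimmed} covariance matrices $\Sigma_{j} = |B_j|^{-1}\sum_{i \in B_j} \widetilde{X}_i\widetilde{X}_i^{\T}\mathbf{1}[\|\widetilde{X}_i\| \leq D]$, which do not depend on $w$, and the estimator is $\mathsf{HLZ}(\Sigma_1 w, \dots, \Sigma_{\ell}w; \varepsilon)$, i.e., linear in $w$ at the bucket level. Uniformity then reduces to a \emph{single} high-probability event: a uniform quantile bound over the bilinear class $\{(u,w) \mapsto u^{\T}(\Sigma_j - \Sigma)w\}$ (Lemma~2.3 of \citet{klochkov2020robust}), whose dominant term $\E\bigl\| T^{-1}\sum_i \varepsilon_i Z_i \bigr\|$ is controlled by matrix Bernstein following \citet{mendelson2020robust} --- this is where the truncation level $D \sim \|\Sigma\|^{1/2}(T\mathbf{r}(\Sigma)/\log\mathbf{r}(\Sigma))^{1/4}$, absent from your plan, is indispensable (with only four moments, $\|X_iX_i^{\T}\|$ is unbounded and Bernstein does not apply), and it is the source of the $\log\mathbf{r}(\Sigma)$ factor you attribute to a net. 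On that event, for \emph{every} $w$ simultaneously, $\Sigma w$ is a combinatorial center of $\{\Sigma_j w\}$, and the spectral-versus-combinatorial center equivalence plus a pigeonhole argument transfers this to the HLZ output.

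Two further points. First, your hoped-for multiplicative error $\sqrt{w^{\T}\Sigma w}\,\tau$ is plausible pointwise (your moment computation is correct under bounded kurtosis), but the paper only establishes, and only needs, the weaker uniform scaling $\|\hat a_{\delta}(w) - \Sigma w\| \lesssim \Delta_{\Sigma}\|w\|$; chasing the $\Sigma$-norm scaling uniformly would make the uniformity problem strictly harder for no gain in the final rate. Second, your claim that one drives the optimization error to zero ``by taking $K$ large'' conflicts with the error-accumulation phenomenon in the non-strongly-convex case: with error $\Delta_{\Sigma}\|w\|$, the paper's Lemma~\ref{lemma_ill_cond} requires $S\eta\Delta_{\Sigma} \leq 1$ and balances $1/(\eta S) + \eta\Delta_{\Sigma}^2 S$ by \emph{early stopping} at $S \sim 1/(\eta\Delta_{\Sigma})$ --- running longer overfits, as the paper's simulations confirm. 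Your constant-radius ball constraint could substitute for the iterate-boundedness part of that argument (and implicitly assumes $\|w^*\| \lesssim 1$, which is also hidden in the simplified statement), but it does not remove the need to stop, and it presumes a known radius containing $w^*$, which the paper's early-stopping device avoids.
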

}
This result suggests that the GMV portfolio converges to optimum as long as $ \mathbf{r}(\Sigma) $ is much smaller than $ T $, which is a rather adequate assumption.
For example, for the S\&P100 dataset, we evaluate that $ \mathbf{r}(\Sigma) \approx 3 $ and for the Russell3000 constituents, $ \mathbf{r}(\Sigma) \approx 7 $. Moreover, implying that effective rank is also a measure of effective dimensionality of the investment universe, this measure also reflects the conclusions of widely used three and five Fama-French factor models, who proved that a cross-sectional variation in average stock returns can be explained with only three   \cite{fama1992cross}  extended later to five observable factors, \cite{fama2015five}. 
Notice also that the condition number $\kappa = \lambda_{\max}(\Sigma) / \lambda_{\min}(\Sigma)$ is bounded from below by $ N / \mathbf{r}(\Sigma) $, hence the covariance matrix is indeed ill-conditioned in these two applications.

\subsection{Recent advances in robust statistics}
\label{sec:advances}
The covariance matrix and the mean
are not known in practice and must be estimated
based on the observed log-returns. 

In an idealized situation where $ X_i \sim \mathcal{N}(\mu, \Sigma) $ are Gaussian, we have that the standard empirical mean estimator $ \hat{\mu} = T^{-1} \sum_{i = 1}^{T} X_i $ provides one with optimal high probability deviation bounds. In particular, for {any} $ \delta \in (0, 1) $, we have that, with probability at least $ 1-  \delta$,
\begin{equation}
    \label{gaussian_emp_mean}
    \| \hat{\mu} - \mu \| \le \sqrt{\frac{\Tr(\Sigma)}{T}}  + \sqrt{\frac{2\| \Sigma \| \log(1/\delta)}{T}}
\end{equation}
{See Example 5.7 in \citet{boucheron2013concentration} for this derivation. The sharp deviation term $\sqrt{\frac{2\| \Sigma \| \log(1/\delta)}{T}}$ is very specific to the Gaussian assumption and could not be expected for less regular distributions. In particular, } here the dependence on the confidence level is logarithmic and additive, in the sense that the bound separates into the \emph{strong term}  scaled with $\sqrt{\Tr(\Sigma)}$ and corresponding to the error on average, and the \emph{weak term} that is scaled with~$\sqrt{\|\Sigma\|}$. The weak term can potentially be a lot smaller than the strong one, even for very small values of $\delta$.

Similarly, \cite{koltchinskii2017concentration} proved that in the case of i.i.d. zero mean Gaussian observations, the sample covariance $ \hat{\Sigma} = T^{-1} \sum_{i = 1}^{T} X_iX_i^{\T}$ satisfies the following deviation bound. For any $\delta \in (0, 1)$, with probability at least $1-\delta$,
\begin{equation}
\label{eq:koltchlounici}
    \| \hat{\Sigma} - \Sigma \| \lesssim \|\Sigma\|\sqrt{\frac{\mathbf{r}(\Sigma) +\log(1/\delta)}{T}},
\end{equation}
whenever $T \gtrsim \mathbf{r}(\Sigma) + \log(1/\delta)$. For the version of this inequality with explicit constants we refer to \citet{zhivotovskiy2021dimension}.

Having these performance bounds in mind, one is interested if the same bounds can be achieved under milder assumptions in a computationally efficient manner.
\cite{lugosi2019sub} developed an estimator that matches the bound \eqref{gaussian_emp_mean} {up to multiplicative constant factors} under the only assumption that the covariance matrix exists (i.e., {the} two moments assumption). Loosely speaking, they propose to control the deviations of the median-of-means of the projections $ X_i^{\T} v $ uniformly in all directions $ v \in \R^N $. Based on this bound, {they came up with an estimator, which is however not practical}. Further developments were made in \citet{lugosi2021robust}; see also \citet{lugosi2019mean} for a thorough review on this topic. 

\cite{hopkins2018sub} first proposed an estimator that can be computed in polynomial time, and the time complexity was subsequently improved to nearly linear  \citet{cherapanamjeri2019fast, depersin2019robust, hopkins2020robust}. An alternative method called \emph{spectral sample reweighing} was developed in the context of robust estimation with outliers. Given data points $ \{ x_i\}_{i = 1, \dots, k} $ the goal is to reweigh the points $x_i$ with some weights $ u_{i} \in [0, 1]$ and find a center $ \nu \in \R^{N} $ such that the largest eigenvalue of the weighted covariance $ \sum_{i} u_i (x_i - \nu)(x_i - \nu)^{\T} $ is small. \citet*{hopkins2020robust} develop an algorithm that does this in nearly linear time; see also \citet{diakonikolas2017being, zhu2020robust}. More importantly for us, \citet*{hopkins2020robust} establish a direct connection between the sample reweighing and the method developed in \citet{lugosi2019sub}, which makes this approach applicable in the heavy-tailed {setup} as well. We discuss this connection in greater detail in Section~\ref{sec:proof_a_hat}.

{The problem of robust covariance estimation is more challenging. \cite{mendelson2020robust} construct an abstract estimator that matches the bound \eqref{eq:koltchlounici} up to some logarithmic factors and a more recent paper \citet{abdalla2022covariance} removes the remaining logarithmic factor in the bound. Unfortunately, it is still not known whether such performance can be achieved with a computationally efficient algorithm. Existing computationally efficient implementations are showing suboptimal statistical guarantees \citet{ke2019user, ostrovskii2019affine, cherapanamjeri2020algorithms} and sometimes require additional assumptions such as the so-called SoS-hypercontractivity that are hard to verify in the non-Gaussian situation.
Moreover, \citet{cherapanamjeri2020algorithms} conjecture that as long the median-of-means approach is used, it is algorithmically hard to robustly estimate the sample covariance matrix in the presence of heavy-tailed data.}

After this short excursion to some recent results in robust estimation, let us now come back to our portfolio optimization problem. Since we cannot get our hands on a robust covariance estimator, we take another route by observing that both MV and GMV are convex optimization problems. 
\subsection{Gradient descent for portfolio optimization}
\label{sec:pgd}
Our goal is to avoid the estimation of the whole covariance matrix, but rather resort to the estimation of the action of this operator $ \Sigma w $ on some limited set of vectors $w$. We will use a procedure based on \emph{projected gradient descent} (PGD), which is a standard convex optimization method.
For instance, if we want to minimize the GMV objective with known $ \Sigma$, the following sequence of approximations converges to an optimal solution (which is not necessarily unique): we start with arbitrary initial vector $w_0$ and then take the update steps,
\begin{equation}\label{GD_for_GMV}
    w_{s} = \Pi_{1} [w_{s-1} - \eta \nabla_{w} R(w_{s-1}; \Sigma)],
    \qquad
    s = 1, 2, \dots,
\end{equation}
where $\Pi_{1}$ is the orthogonal projector onto the restricted (convex) set $ \{w: \; w^{\T}\mathbf{1} = 1\} $, which can be explicitly defined by the mapping,
\begin{equation}
    \label{Pi_1_definition}
    \Pi_{1} x = (I - N^{-1} \onev\onev^{\T}) x + N^{-1} \onev \, .
\end{equation}
It is straightforward to see that $ R(\,\cdot\,; \Sigma)$ is convex (since the covariance operator is positive semi-definite), and $\| \Sigma\|$-smooth {(we say that a continuously differentiable function $f$ is $\beta$-smooth if the
gradient $\nabla f$ is $\beta$-Lipschitz, that is,
$\|\nabla f(x) - \nabla f(y)\| \le \beta \|x - y\|$)}. By Theorem~{3.7} from \citet{bubeck2014convex} the sequence \eqref{GD_for_GMV} converges to a minimum at a rate $1/s$ as long as $ \eta \leq 1/ \| \Sigma \|$. Moreover, in the case where $ \Sigma$ is non-degenerate, the objective becomes strongly convex, and the sequence converges at a faster exponential rate under the same requirement on the step size, see Theorem 3.10 in \citet{bubeck2014convex}.

The case of MV portfolio is similar, only this time we need to maximize a concave function instead of minimizing a convex one. If we replace $\nabla_{w} R(w_{s-1}; \Sigma)$ with $ -\nabla_{w} M_{\gamma}(w_{s-1}; \Sigma)$ in \eqref{GD_for_GMV}, then by the same reasons, the sequence
converges to the maximum of $M_{\gamma}$ as long as $ \eta \leq (\gamma \| \Sigma \|)^{-1}$, with exponential rate when $\Sigma$ is non-degenerate.

The PGD iterations require computation of the following gradients,
\[
    \nabla_{w}R(w; \Sigma) = \Sigma w
    \qquad \text{or} \qquad
    \nabla_{w}M_{\gamma}(w; \Sigma) = \mu - \gamma \Sigma w.
\]
where the mean $\mu$ and covariance $\Sigma$ are typically replaced with their empirical counterparts that are calculated using given historical observations $X_1, \dots, X_T$. As discussed in the previous section, there is a practical robust mean estimator in \citet{hopkins2020robust} with all desired properties. Since such {an} estimator is not available for the covariance operator, we instead produce an estimator $\hat{a}_{\delta}(w)$ for the PGD increment that estimates $ \Sigma w$ for each $w$ separately, and plug it into the update steps of PGD. 

To see how it can be done, suppose for a moment that the expectation vanishes. Then, we can represent this product as a mean of a random vector as follows,
\[
    \Sigma w = {\E XX^{\T} w\, }.
\]
We therefore can apply the robust mean algorithm to the vectors {$X_iX_i^{\T}w$} and obtain a robust estimator of $\Sigma w$. However, we need to take additional care to ensure that the estimator is an appropriate approximation uniformly in all directions. For this, we slightly adjust the procedure in the spirit of \citet{mendelson2020robust}. In the latter work, the only assumption used is the equivalence of the fourth and the second moments in all directions {sometimes called the \emph{bounded kurtosis} assumption.}

\begin{assumption}[Bounded kurtosis]
\label{bounded_kurtosis_assumptino}
The return vectors $ X_1, \dots, X_T $ are i.i.d. observations of a random vector $X  $, that has mean $\mu$, covariance $\Sigma$, and satisfies for all $ u \in \R^{N} $,
\begin{equation}\label{bounded_kurtosis_eq}
    \E^{1/4} |u^{\T}(X - \mu)|^{4} \leq K \E^{1/2} |u^{\T}(X - \mu)|^{2},
\end{equation}
where $ K \geq 1$ is some fixed constant. For the rest of the paper, we ignore the explicit dependence on $K$ in our bounds treating $K$ as an absolute constant.
\end{assumption}

All our results will be stated under this assumption. Importantly, no other assumptions are made on the covariance matrix $\Sigma$ throughout the paper. This makes our setup quite different from traditional factor models where one assumes some specific structure of $\Sigma$. The assumption is clearly satisfied for Gaussian random vectors with $K$ being an absolute constant. However, we even cover some heavy-tailed distributions. {For some heavy-tailed examples where Assumption \ref{bounded_kurtosis_assumptino} is satisfied we refer to \citet{mendelson2020robust}}. In the context of covariance estimation, a robust estimator is one that has Gaussian deviation bounds (i.e., as in \eqref{eq:koltchlounici}) but only requires the underlying distribution to follow the bounded kurtosis assumption. The next step is to provide a robust estimator of $ \Sigma w $ that works simultaneously in all directions. Recall that $\mathbf{r}(\Sigma)$ denotes the effective rank 
$
\mathbf{r}(\Sigma) = \operatorname{Tr}(\Sigma) / \| \Sigma\| \, .
$

\begin{proposition}\label{a_hat_prop}
Suppose that Assumption \eqref{bounded_kurtosis_assumptino} holds. Fix $\delta \in (0, 1)$
There is a computationally efficient estimator $ \hat{a}_{\delta}(w) $ that depends on direction $w$ and $T$ i.i.d. observations and such that, with probability at least $1-\delta$,
\[
    \| \hat{a}_{\delta}(w) - \Sigma w\| \lesssim  \| \Sigma \| \sqrt{\frac{\mathbf{r}(\Sigma) \log \mathbf{r}(\Sigma) + \log(1/\delta)}{T}}  \| w\|,
\]
uniformly for all vectors $w$. 
\end{proposition}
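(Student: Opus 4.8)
The plan is to build $\hat a_{\delta}(w)$ as a multivariate median-of-means estimator of the vector $\Sigma w = \E[(X - \mu)(X - \mu)^{\T}]w$ and to reduce the operator-norm-type guarantee to a uniform concentration statement for an associated bilinear process. First I would reduce to the centered, unit-norm case: since $\hat a_{\delta}$ can be taken positively homogeneous, $\hat a_{\delta}(cw) = c\,\hat a_{\delta}(w)$, it suffices to prove the bound for $\| w \| = 1$, and the mean $\mu$ can be removed either by working with the symmetrized vectors $X_i - X_j$ or by plugging in the robust mean estimator of \citet{hopkins2020robust}; I will assume $\mu = 0$ below. I would then partition the sample into $k$ blocks, form the block averages $Z_{\ell}(w) = |B_{\ell}|^{-1}\sum_{i \in B_{\ell}} X_i X_i^{\T} w$ (each an unbiased estimate of $\Sigma w$), and aggregate them with the spectral sample-reweighting SDP of \citet{hopkins2020robust}. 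This is exactly where computational efficiency enters: by the equivalence between spectral reweighing and the median-of-means estimator of \citet{lugosi2019sub} discussed in \refsec{sec:proof_a_hat}, the aggregate is computable in nearly linear time while retaining median-of-means deviation guarantees.

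The core observation is the duality $\| \hat a_{\delta}(w) - \Sigma w \| = \sup_{u \in \mathbb{S}^{N-1}} u^{\T}(\hat a_{\delta}(w) - \Sigma w)$, so that controlling the estimation error uniformly in $w$ amounts to controlling the bilinear median-of-means process indexed by pairs $(u, w) \in \mathbb{S}^{N-1} \times \mathbb{S}^{N-1}$, whose population value is $u^{\T}\Sigma w = \E (u^{\T}X)(w^{\T}X)$. For a fixed pair, the variance of the summands is controlled through Assumption~\ref{bounded_kurtosis_assumptino}: by Cauchy--Schwarz and $L_4$--$L_2$ norm equivalence,
\[
\operatorname{Var}\bigl((u^{\T}X)(w^{\T}X)\bigr) \le \E (u^{\T}X)^2 (w^{\T}X)^2 \le \E^{1/2}(u^{\T}X)^4\,\E^{1/2}(w^{\T}X)^4 \le K^4 (u^{\T}\Sigma u)(w^{\T}\Sigma w).
\]
Consequently the scalar median-of-means over $k$ blocks concentrates around $u^{\T}\Sigma w$ at the sub-Gaussian rate $\lesssim \sqrt{(u^{\T}\Sigma u)(w^{\T}\Sigma w)\,k/T}$ with probability $1 - e^{-ck}$, and since $(u^{\T}\Sigma u)(w^{\T}\Sigma w) \le \| \Sigma \|^2$ this already yields, for a single direction, the weak term $\| \Sigma \|\sqrt{\log(1/\delta)/T}$ upon taking $k \sim \log(1/\delta)$.

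The main obstacle is upgrading this pointwise estimate to one holding simultaneously over all $(u, w)$ while paying only the effective rank $\mathbf{r}(\Sigma)$ rather than the ambient dimension $N$. A naive $\varepsilon$-net of the sphere has cardinality $e^{cN}$, which would inject $N$ into the bound; the gain comes from the fact that the per-direction standard deviation $\sqrt{(u^{\T}\Sigma u)(w^{\T}\Sigma w)}$ is large only for directions aligned with the top eigenspaces of $\Sigma$. I would therefore run the covering/chaining argument with respect to the intrinsic $\Sigma$-weighted metric, i.e.\ on the ellipsoid $\Sigma^{1/2}\mathbb{S}^{N-1}$, whose metric entropy is governed by $\mathbf{r}(\Sigma) = \Tr(\Sigma)/\| \Sigma \|$; this is precisely the mechanism behind the effective-rank bound \eqref{eq:koltchlounici}, here adapted to the median-of-means process in the spirit of \citet{mendelson2020robust}. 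The discretization of the ellipsoid and the union bound over the net contribute the extra logarithmic factor, so that setting $k \sim \mathbf{r}(\Sigma)\log\mathbf{r}(\Sigma) + \log(1/\delta)$ produces the strong term $\| \Sigma \|\sqrt{\mathbf{r}(\Sigma)\log\mathbf{r}(\Sigma)/T}$.

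Combining the strong and weak terms and recalling $u^{\T}\Sigma u \le \| \Sigma \|$ together with $w^{\T}\Sigma w \le \| \Sigma \|\| w \|^2$ gives the claimed
\[
\| \hat a_{\delta}(w) - \Sigma w \| \lesssim \| \Sigma \|\sqrt{\frac{\mathbf{r}(\Sigma)\log\mathbf{r}(\Sigma) + \log(1/\delta)}{T}}\,\| w \|
\]
uniformly in $w$. I would emphasize that this does not contradict the conjectured hardness of robust covariance estimation: the output $w \mapsto \hat a_{\delta}(w)$ is a genuinely nonlinear map rather than the action of a single estimated matrix, so an operator-norm-type guarantee is obtained without ever producing a robust estimate of $\Sigma$ itself.
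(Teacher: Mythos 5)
Your architecture is essentially the paper's: center by differencing, split into buckets, aggregate the bucket vectors $\Sigma_j w$ with the algorithm of \citet{hopkins2020robust}, dualize $\|\hat a_\delta(w)-\Sigma w\|=\sup_u u^{\T}(\hat a_\delta(w)-\Sigma w)$ to a bilinear process over pairs $(u,w)$, and control per-direction variance via the $L_4$--$L_2$ equivalence exactly as in the paper's proof. However, there is one genuine gap: you form the bucket averages from the \emph{raw} quadratic summands $X_iX_i^{\T}w$, with no truncation. Under Assumption~\ref{bounded_kurtosis_assumptino} alone, the squared norm $\|X_i\|^2$ has only two moments, and the strong (uniform) part of your argument --- whether run as chaining on the ellipsoid $\Sigma^{1/2}\mathbb{S}^{N-1}$ or, as in the paper, as the Rademacher-complexity term in a uniform quantile bound (Lemma~2.3 of \citealp{klochkov2020robust}) --- requires control of
\[
\E \sup_{u,w}\frac{1}{T}\sum_{i}\varepsilon_i (u^{\T}X_i)(w^{\T}X_i)
= \E\Bigl\|\frac{1}{T}\sum_i \varepsilon_i X_iX_i^{\T}\Bigr\| \, ,
\]
which for the untruncated summands can be as large as $\E\max_i\|X_i\|^2/T$. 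Since bounded kurtosis gives only $\E\|X\|^4\le K^4\{\Tr(\Sigma)\}^2$, hence $\E\max_i\|X_i\|^2\lesssim \sqrt{T}\,\Tr(\Sigma)$, this term can be of order $\|\Sigma\|\,\mathbf{r}(\Sigma)/\sqrt{T}$ --- a factor $\sqrt{\mathbf{r}(\Sigma)}$ worse than the claimed strong term $\|\Sigma\|\sqrt{\mathbf{r}(\Sigma)\log\mathbf{r}(\Sigma)/T}$. The paper resolves this precisely by truncating, setting $Z_i=X_iX_i^{\T}\mathbf{1}[\|X_i\|\le D]$ with $D\sim\|\Sigma\|^{1/2}\bigl(T\mathbf{r}(\Sigma)/\log\mathbf{r}(\Sigma)\bigr)^{1/4}$, so that Matrix Bernstein applies to summands bounded by $D^2$, at the cost of a bias $\|\Sigma\|^2\mathbf{r}(\Sigma)/D^2$ controlled by Lemma~2.1 of \citet{mendelson2020robust}; balancing these two terms is also where the $\log\mathbf{r}(\Sigma)$ factor in the statement comes from (and why the paper's remark flags the parameter $D$). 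Your chaining-on-the-ellipsoid plan does not evade this, since the chaining remainder is exactly the unbounded quadratic process above.

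Two secondary deviations are worth noting, neither fatal in itself. First, you place $\mathbf{r}(\Sigma)\log\mathbf{r}(\Sigma)$ in the number of blocks $k$ and union-bound over a net; the paper instead keeps $\ell\sim\log(1/\delta)$ buckets and extracts $\mathbf{r}(\Sigma)\log\mathbf{r}(\Sigma)$ from the uniform deviation (Rademacher) term. Your variant is delicate because empirical quantiles are not Lipschitz in the direction, so discretization does not transfer for free; the uniform-quantile lemma the paper invokes is designed to sidestep exactly this. Second, your step ``the SDP aggregate retains median-of-means guarantees'' needs the two-sided equivalence between combinatorial and spectral centers (Lemma~\ref{lem:spectr_comb}) together with the pigeonhole/triangle argument showing that the output of $\mathsf{HLZ}$, being a near-optimal spectral center, is itself a combinatorial center close to $\Sigma w$; you gesture at this connection but the quantitative back-and-forth is a required ingredient, not a citation-level fact.
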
 
We postpone the proof and detailed description of the estimation algorithm until Section~\ref{sec:proof_a_hat}. {We note that its computational complexity is $\widetilde{O}(TN + T^{1/2} N \log(1/\delta)^{2})$, where the notation $ \widetilde{O} $ suppresses some multiplicative logarithmic factors.}

\begin{remark}
We expect that the proposed estimator is robust to finite number of outliers as well, as typically happens with estimators based on \cite{lugosi2019mean}. To avoid technically involved proofs, we do not check this rigorously, instead we demonstrate robustness to outliers in the simulation study in Section~\ref{sec:sim}.
\end{remark}

\begin{remark}
For technical reasons, the estimator $ \hat{a}_{\delta}(w) $ depends on a {norm-truncation} parameter $ D $ that needs to be of order~$ \left(\frac{T \mathbf{r}(\Sigma)}{\log \mathbf{r}(\Sigma)}\right)^{1/4} $, which is unknown in general.
{It appears that since $D$ increases with $T$, in many natural situations this truncation parameter is of a much larger order than {$\max\limits_i\|X_i\|$} and can be mostly ignored in practice. For more details see Section \ref{sec:proof_a_hat}.} 
\end{remark}

\begin{remark}
We point out that when one has access to some covariance estimator $\hat{\Sigma}$, one can simply take a family of estimators $ \hat{a}(w) = \hat{\Sigma} w $. For instance, in the Gaussian case, taking the standard empirical covariance estimator would yield thanks to \eqref{eq:koltchlounici},
\[
    \| \hat{a}(w) - \Sigma w\| \lesssim  \| \Sigma \| \sqrt{\frac{\mathbf{r}(\Sigma)  + \log(1/\delta)}{T}}  \| w\| \ ,
\]
with probability at least $1 - \delta$, uniformly for all $w$. The estimator of Proposition \ref{a_hat_prop} achieves the same rate of convergence under minimal distributional assumptions.
\end{remark}

Now we can plug in the estimator of $\Sigma w$ (appearing in Proposition \ref{a_hat_prop}) into the update rule \eqref{GD_for_GMV}. To be precise, in the case of GMV optimization, our updates look as follows,
\[
    w_{s} = \Pi_{1} \left[ w_{s-1} - \eta \hat{a}_{\delta}(w_{s-1}) \right], \qquad s = 1, 2, \dots
\]
Naturally, the error may accumulate with each update, and we need to carefully analyze  how the resulting solution differs from the optimum, to which the sequence \eqref{GD_for_GMV} converges. We analyze this update rule in two separate cases.
 
First, we consider the case of a well-conditioned matrix $ \Sigma $, meaning that the ratio of its maximal and minimal eigenvalues is bounded by a constant. 
This means that the problem of maximizing the MV utility is a strongly-convex optimization problem, so that the gradient descent sequence enjoys exponential convergence rate and, as we show below, the error of estimation does not accumulate. However, in that situation the effective rank $\mathbf{r}(\Sigma) = \Tr(\Sigma) / \| \Sigma \|$ is of order $N$, so the convergence only works in the case where  $ N /T  $ is small. Moreover, in typical applications, the covariance matrix is ill-conditioned, which is one of the reasons the MV portfolio performs so poorly. 
For instance, this can be checked through evaluation of the effective rank: for the S\&P100 dataset we estimate $ \mathbf{r}(\Sigma) \approx 3 $ and for the Russell3000 set we estimate that $ \mathbf{r}(\Sigma) \approx 7 $, in both cases much smaller than the dimension $N$. This brings us to the second part of our GD analysis, where we only consider the case of GMV optimization with ill-conditioned covariance matrix that has small effective rank. This scenario corresponds to non-strongly convex optimization and has weaker convergence rate. However, it enjoys dimension-free bounds, meaning that the convergence is guaranteed as long as the number of observations is much larger than $\mathbf{r}(\Sigma)$, regardless of how high the total number of assets is. We also point out that in this case, one has to stop after an appropriate number of steps to avoid overfitting.

\section{Well-conditioned case}
\label{sec:well_cond}

For maximizing the MV utility $ M_{\gamma}(w; \mu, \Sigma) $, we consider the following updates,
\begin{equation}\label{well_cond_updates}
    w_{s} = \Pi_{1}\left[ w_{s-1} + \eta(\hat{\mu} - \gamma \hat{a}(w_{s-1})) \right],
    \qquad
    s = 1, 2, \dots
\end{equation}
where $ \hat{\mu} $ is some estimator of mean $ \mu $, and $ \hat{a}(w) $ is some family of estimators for the action of covariance operator $ \Sigma w $. We first show a deterministic result that controls the convergence through the errors of estimators $ \hat{\mu} $ and $ \hat{a}(w) $. 

\begin{lemma}\label{well_cond_convergence}
Denote, $ {w}^* = \arg\max_{w^{\T}\mathbf{1} = 1} M_{\gamma}(w; \mu, {\Sigma})$. Suppose that we have {an} access to an estimator $\hat{\mu}$ satisfying 
$$
    \| \hat{\mu} - \mu\| \leq \Delta_{\mu},
$$ 
and {an} access to a family of estimators $\hat{a}(w)$ satisfying uniformly for all $ w\in \R^{N} $,
$$ 
    \| \hat{a}(w) - \Sigma w \| \leq \Delta_{\Sigma} \| w \| \, .
$$
Let $ \lambda_{\max}$, $\lambda_{\min}$ denote the maximal and minimal eigenvalues of $\Sigma$, respectively. Assume that $ \eta \leq 1/(\gamma \lambda_{\max}) $ and $ \Delta_{\Sigma} < \lambda_{\min} $.  Then, the sequence \eqref{well_cond_updates} satisfies
\[
    \| w_{s} - w^{*} \| < \left(1 - {\gamma\eta(\lambda_{\min} -\Delta_{\Sigma})}\right)^{s} \| w_{0} - w^{*} \| + \frac{ {\gamma^{-1}} \Delta_{\mu} + \Delta_{\Sigma}\| w^{*}\|}{\lambda_{\min} - \Delta_\Sigma} \, .
\]
\end{lemma}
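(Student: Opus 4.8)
The plan is to run the standard contraction argument for projected gradient ascent, treating the two estimation errors as a perturbation that both slows the contraction and creates an irreducible bias floor. First I would record two facts about the projector $\Pi_1$ from \eqref{Pi_1_definition}. Its linear part $I - N^{-1}\onev\onev^{\T}$ is an orthogonal projection onto the hyperplane orthogonal to $\onev$, so $\Pi_1$ is non-expansive: $\|\Pi_1 x - \Pi_1 y\| \le \|x - y\|$. Moreover, the maximizer $w^*$ is a fixed point of the \emph{exact} (population) projected gradient map, $w^* = \Pi_1[w^* + \eta(\mu - \gamma\Sigma w^*)]$, which is precisely the first-order optimality (variational inequality) condition for maximizing the concave $M_{\gamma}$ over the affine constraint, valid for any $\eta > 0$.

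Next I would introduce $G(w) = w + \eta(\mu - \gamma\Sigma w)$ for the exact ascent map and $\hat G(w) = w + \eta(\hat\mu - \gamma\hat a(w))$ for its empirical counterpart, so that $w_s = \Pi_1\hat G(w_{s-1})$ and $w^* = \Pi_1 G(w^*)$. Non-expansiveness gives $\|w_s - w^*\| \le \|\hat G(w_{s-1}) - G(w^*)\|$, which I would split into a contraction term $G(w_{s-1}) - G(w^*) = (I - \eta\gamma\Sigma)(w_{s-1} - w^*)$ and an error term $\hat G(w_{s-1}) - G(w_{s-1}) = \eta[(\hat\mu - \mu) - \gamma(\hat a(w_{s-1}) - \Sigma w_{s-1})]$. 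The step-size condition $\eta \le 1/(\gamma\lambda_{\max})$ makes $I - \eta\gamma\Sigma$ positive semidefinite with all eigenvalues in $[1 - \eta\gamma\lambda_{\max}, 1 - \eta\gamma\lambda_{\min}] \subseteq [0,1]$, so the contraction term is bounded by $(1 - \eta\gamma\lambda_{\min})\|w_{s-1} - w^*\|$.

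The slightly delicate step, and the one that shapes the final bound, is the error term. The triangle inequality together with the assumed guarantees $\|\hat\mu - \mu\| \le \Delta_{\mu}$ and $\|\hat a(w) - \Sigma w\| \le \Delta_{\Sigma}\|w\|$ gives $\|\hat G(w_{s-1}) - G(w_{s-1})\| \le \eta\Delta_{\mu} + \eta\gamma\Delta_{\Sigma}\|w_{s-1}\|$; the key point is that the covariance-action error scales with $\|w_{s-1}\|$ rather than $\|w^*\|$, so I would use $\|w_{s-1}\| \le \|w_{s-1} - w^*\| + \|w^*\|$. The $\|w_{s-1} - w^*\|$ portion then merges into the contraction coefficient, upgrading $1 - \eta\gamma\lambda_{\min}$ to $1 - \eta\gamma(\lambda_{\min} - \Delta_{\Sigma})$, while the $\|w^*\|$ portion joins the additive error. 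This yields the one-step recursion $\|w_s - w^*\| \le \rho\|w_{s-1} - w^*\| + b$ with $\rho = 1 - \eta\gamma(\lambda_{\min} - \Delta_{\Sigma})$ and $b = \eta(\Delta_{\mu} + \gamma\Delta_{\Sigma}\|w^*\|)$.

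Finally I would unroll the recursion. The hypothesis $\Delta_{\Sigma} < \lambda_{\min}$ together with $\eta\gamma(\lambda_{\min} - \Delta_{\Sigma}) \le \eta\gamma\lambda_{\max} \le 1$ keeps $\rho \in [0,1)$, so summing the geometric series gives $\|w_s - w^*\| \le \rho^s\|w_0 - w^*\| + b/(1-\rho)$, and substituting $1 - \rho = \eta\gamma(\lambda_{\min} - \Delta_{\Sigma})$ cancels the $\eta$ and $\gamma$ factors, producing exactly $\frac{\gamma^{-1}\Delta_{\mu} + \Delta_{\Sigma}\|w^*\|}{\lambda_{\min} - \Delta_{\Sigma}}$. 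The main obstacle is genuinely just the bookkeeping in the error term, namely recognizing that the $\|w_{s-1}\|$ dependence must be split so that the piece proportional to the iterate gap is absorbed into the effective strong-convexity rate; everything else is the textbook strongly-convex projected gradient contraction.
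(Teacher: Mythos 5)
Your proposal is correct and takes essentially the same route as the paper's proof: the same decomposition into a contraction term $(I - \eta\gamma\Sigma)(w_{s-1} - w^{*})$ plus an estimation-error term of size $\eta\Delta_{\mu} + \eta\gamma\Delta_{\Sigma}\|w_{s-1}\|$, the same absorption of the iterate-gap part of that error into the effective rate $1 - \eta\gamma(\lambda_{\min} - \Delta_{\Sigma})$, and the same geometric unrolling. The only difference is cosmetic packaging: the paper tracks the projector $\Pi_{0}$ explicitly (via $\Pi_{1}(x + y) = \Pi_{1}x + \Pi_{0}y$ and the verification that $\Pi_{0}(\gamma\Sigma w^{*} - \mu) = 0$ from the Lagrangian computation of $w^{*}$), whereas you invoke non-expansiveness of $\Pi_{1}$ together with the fixed-point property $w^{*} = \Pi_{1}[w^{*} + \eta(\mu - \gamma\Sigma w^{*})]$, which is an equivalent formulation of the same optimality condition.
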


We now apply this lemma to the case where we use $ \hat{\mu}_{\delta} $ from \citet{hopkins2020robust} and $ \hat{a}_{\delta}(w) $ from Proposition~\ref{a_hat_prop}. {Theorem D.3 from \citet{hopkins2020robust} shows that their estimator satisfies the sub-Gaussian bound \eqref{gaussian_emp_mean}. That is,} with probability at least $1-\delta$,
\begin{equation}\label{mu_bound}
    \| \hat{\mu}_{\delta} - \mu \| \lesssim \| \Sigma \|^{1/2} \sqrt{\frac{\mathbf{r}(\Sigma) + \log(1/\delta)}{T}}
\end{equation}
Furthermore, by Proposition~\ref{a_hat_prop}, with probability at least $1-\delta$, simultaneously for all $w \in \R^N$,
\begin{equation}\label{a_bound}
    \| \hat{a}_{\delta}(w) - \Sigma w\| \lesssim \| \Sigma \| \sqrt{\frac{\mathbf{r}(\Sigma) \log \mathbf{r}(\Sigma) + \log(1/\delta)}{T}} \| w \| \, .
\end{equation}
{Substituting the two error terms \eqref{mu_bound}, \eqref{a_bound} into Lemma~\ref{well_cond_convergence}, we arrive at the following result. We postpone the derivations to Section~\ref{proof_well_cond_main_res}.}

\begin{theorem}\label{well_cond_main_res} Suppose, we are given independent $ X_{1}, \dots, X_T $ that have mean $\mu$ and covariance $\Sigma$, and the distribution satisfies the bounded kurtosis assumption \eqref{bounded_kurtosis_eq}.
Let $ \kappa = \lambda_{\max}(\Sigma) / \lambda_{\min}(\Sigma) $ {denote} the condition number. There is an absolute constant $ C>0$, such that the following holds. {If for $\delta \in (0, 1)$, the following holds}  
\begin{equation}\label{T_lower_bound}
    T \geq C \kappa^{2} \left(\mathbf{r}(\Sigma)\log\mathbf{r}(\Sigma) + \log(1/\delta)\right),
\end{equation}
{then} there is an estimator $ \hat{w}_{\delta} $ such that, with probability at least $1-\delta$,
\[  
    \max_{w^{\T}\mathbf{1} = 1} M_{\gamma}(w; \Sigma, \mu) -  M_{\gamma}(\hat{w}_{\delta}; \Sigma, \mu) \lesssim \kappa^{2} \left({\gamma^{-1} + \gamma} \| \Sigma\| \|w^{*}\|^{2} \right)   \frac{\mathbf{r}(\Sigma)\log\mathbf{r}(\Sigma) + \log(1/\delta)}{T} \, .
\]
\end{theorem}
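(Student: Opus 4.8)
The plan is to combine the deterministic convergence guarantee of Lemma~\ref{well_cond_convergence} with the two high-probability error bounds \eqref{mu_bound} and \eqref{a_bound}, and then convert the resulting control on $\|\hat{w}_\delta - w^*\|$ into a bound on the utility gap. First I would invoke \eqref{mu_bound} and \eqref{a_bound} simultaneously: each holds with probability at least $1-\delta/2$, so by a union bound both hold with probability at least $1-\delta$, and the extra factor of $2$ inside the logarithm is absorbed into the implicit constants. This lets me take, in the notation of Lemma~\ref{well_cond_convergence},
\[
\Delta_\mu \lesssim \|\Sigma\|^{1/2}\sqrt{\tfrac{\mathbf{r}(\Sigma)+\log(1/\delta)}{T}}, \qquad \Delta_\Sigma \lesssim \|\Sigma\|\sqrt{\tfrac{\mathbf{r}(\Sigma)\log\mathbf{r}(\Sigma)+\log(1/\delta)}{T}} \, .
\]
Writing $\varepsilon^2 = \tfrac{\mathbf{r}(\Sigma)\log\mathbf{r}(\Sigma)+\log(1/\delta)}{T}$ and upper bounding the smaller mean-term quantity by the larger $\varepsilon$, I get $\Delta_\Sigma \lesssim \lambda_{\max}\varepsilon$ and $\Delta_\mu \lesssim \lambda_{\max}^{1/2}\varepsilon$.

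Next I would verify the hypotheses of the lemma under \eqref{T_lower_bound}. Since $\Delta_\Sigma \lesssim \lambda_{\max}\varepsilon$ and $\varepsilon^2 \le 1/(C\kappa^2)$, choosing the absolute constant $C$ large enough forces $\Delta_\Sigma \le \lambda_{\min}/2 < \lambda_{\min}$, so the contraction factor $1-\gamma\eta(\lambda_{\min}-\Delta_\Sigma)$ lies strictly in $(0,1)$ and $\lambda_{\min}-\Delta_\Sigma \ge \lambda_{\min}/2$. Running the iteration \eqref{well_cond_updates} for enough steps $s$ that the geometric transient term $(1-\gamma\eta(\lambda_{\min}-\Delta_\Sigma))^s\|w_0-w^*\|$ is dominated by the steady-state term, I set $\hat{w}_\delta = w_s$ and obtain from the lemma
\[
\|\hat{w}_\delta - w^*\| \lesssim \frac{\gamma^{-1}\Delta_\mu + \Delta_\Sigma\|w^*\|}{\lambda_{\min}} \lesssim \frac{\varepsilon\,(\gamma^{-1}\lambda_{\max}^{1/2} + \lambda_{\max}\|w^*\|)}{\lambda_{\min}} \, .
\]

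The key remaining step is the passage from this parameter error to the utility gap, and this is where I would be most careful. Because $M_\gamma(w)=\mu^\T w - \tfrac{\gamma}{2}w^\T\Sigma w$ is an exact quadratic with constant Hessian $-\gamma\Sigma$, the second-order Taylor expansion around $w^*$ is exact:
\[
M_\gamma(w^*) - M_\gamma(\hat{w}_\delta) = -\langle\nabla M_\gamma(w^*),\,\hat{w}_\delta - w^*\rangle + \frac{\gamma}{2}(\hat{w}_\delta-w^*)^\T\Sigma(\hat{w}_\delta-w^*) \, .
\]
The crucial observation is that the first-order term vanishes: at the constrained maximizer the Lagrange condition gives $\nabla M_\gamma(w^*)=\mu-\gamma\Sigma w^*=\lambda\onev$ for some scalar $\lambda$, while $\hat{w}_\delta-w^*$ is orthogonal to $\onev$ because both points satisfy $w^\T\onev=1$; hence $\langle\nabla M_\gamma(w^*),\hat{w}_\delta-w^*\rangle=0$. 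This is exactly what makes the gap quadratic rather than linear in the parameter error. Bounding $(\hat{w}_\delta-w^*)^\T\Sigma(\hat{w}_\delta-w^*)\le \lambda_{\max}\|\hat{w}_\delta-w^*\|^2$ then yields $M_\gamma(w^*)-M_\gamma(\hat{w}_\delta) \le \tfrac{\gamma\lambda_{\max}}{2}\|\hat{w}_\delta-w^*\|^2$.

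Finally I would substitute the bound on $\|\hat{w}_\delta-w^*\|^2$, split $(\gamma^{-1}\lambda_{\max}^{1/2}+\lambda_{\max}\|w^*\|)^2 \le 2\gamma^{-2}\lambda_{\max}+2\lambda_{\max}^2\|w^*\|^2$, and collect eigenvalue ratios into powers of $\kappa=\lambda_{\max}/\lambda_{\min}$. A short computation gives $\tfrac{\gamma\lambda_{\max}}{\lambda_{\min}^2}\cdot\gamma^{-2}\lambda_{\max}=\kappa^2\gamma^{-1}$ and $\tfrac{\gamma\lambda_{\max}}{\lambda_{\min}^2}\cdot\lambda_{\max}^2\|w^*\|^2=\gamma\kappa^2\lambda_{\max}\|w^*\|^2$, so that the gap is bounded by $\kappa^2(\gamma^{-1}+\gamma\|\Sigma\|\|w^*\|^2)\varepsilon^2$, which is precisely the claimed estimate after recalling $\varepsilon^2=\tfrac{\mathbf{r}(\Sigma)\log\mathbf{r}(\Sigma)+\log(1/\delta)}{T}$ and $\lambda_{\max}=\|\Sigma\|$. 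I expect the main obstacle to be this objective-gap conversion, specifically the need to notice the vanishing of the first-order term via the Lagrange condition, since a naive Lipschitz bound would only give a gap linear in $\|\hat{w}_\delta-w^*\|$ and a correspondingly worse, non-matching rate; the remaining bookkeeping with $\kappa$, the union bound, and the choice of iteration count is routine.
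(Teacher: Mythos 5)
Your proposal is correct and takes essentially the same route as the paper's proof: substitute the error bounds \eqref{mu_bound} and \eqref{a_bound} into Lemma~\ref{well_cond_convergence} with $\eta = 1/(\gamma\lambda_{\max})$, use \eqref{T_lower_bound} to guarantee $\Delta_{\Sigma}\leq\lambda_{\min}/2$, run $s\sim\log T$ iterations so the geometric term is dominated, and convert the weight error into the utility gap via the exact quadratic identity $M_{\gamma}(w^{*};\Sigma,\mu)-M_{\gamma}(w_s;\Sigma,\mu)=\tfrac{\gamma}{2}\|\Sigma^{1/2}(w_s-w^{*})\|^{2}$, followed by the same bookkeeping in powers of $\kappa$. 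Your explicit Lagrange-condition argument for the vanishing first-order term is exactly the fact $\Pi_{0}(\gamma\Sigma w^{*}-\mu)=0$ that the paper establishes in the proof of Lemma~\ref{well_cond_convergence} and then invokes tersely with ``since the objective is quadratic and $w^{*}$ is its optimum.''
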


The above result has a number of favorable  properties:
\begin{itemize}
    \item The estimator only requires $\mathcal{O}(\log T)$ gradient descent updates. In addition, the amount of steps only has to be sufficiently large, i.e., there is no danger of overfitting by running the gradient descent for too long;
    \item the bound scales with $1/T $ when all the other parameters are fixed. In the optimization {and statistics} literature, this is regarded as a fast rate convergence. This rate is typical for strongly convex stochastic optimization problems;
    \item The value {$\| w \|^{2}$} is often considered as a diversification measure of an allocation strategy, see \citet{strongin2000beating}. For instance, for the EW portfolio {this} value is $1/N$. One may expect it to be very small for the optimal portfolio.
\end{itemize}
However, the dependence on the condition number of the covariance matrix outweighs some of these useful properties. It is straightforward to verify that $ \kappa \mathbf{r}(\Sigma) \geq N $.
Hence, the above result only works in the setting, where the number of observations $T$ is greater than the dimension $N$. The remaining term $ \kappa $ may further worsen the bound, so our result is rather limited to well conditioned covariance matrices. Unfortunately, it is rarely the case in practice: for our two datasets we estimate that $ \mathbf{r}(\Sigma) \approx 3 $ for S\&P100 and $ \mathbf{r}(\Sigma) \approx  7 $ for Russell3000. The naive lower bound $\kappa \geq N/\mathbf{r}(\Sigma)$ yields that $ \kappa \geq 27 $ for S\&P100 and $ \kappa > 250 $ for Russell3000. Therefore, our result does not contradict a commonly accepted evidence that MV portfolios perform poorly even when $T$ is moderately larger than $N$  \citet{ao2019approaching}.

\section{Ill-conditioned case}
\label{sec:ill_cond}

We now consider the case where we have no control over the condition number of $\Sigma$ and it can even be degenerate. We will state the bound in the regime where only the effective rank $ \mathbf{r}(\Sigma) $ has to be small, and no requirements on the total dimension $N$ are needed. In this section, we only consider the GMV portfolio.

For minimizing the GMV risk $ R(w; \Sigma) $, we consider the following updates,
\begin{equation*}
    w_{s} = \Pi_{1}\left[ w_{s-1} - \eta \hat{a}(w_{s-1})\right],
    \qquad
    s = 1, \dots, S
\end{equation*}
where  $ \hat{a}(w) $ is some family of estimators for the action of covariance operator $ \Sigma w $.

Similarly to the previous section, we first show a deterministic result that controls the convergence through the error of this estimator.

\begin{lemma}\label{lemma_ill_cond}
Denote, $ w^{*} = \arg\min_{w^{\T}\mathbf{1} = 1} R(w; \Sigma)$. Suppose that we have {an} access to a family of estimators $\hat{a}(w)$ satisfying uniformly for all $ w\in \R^{d} $,
$$ 
    \| \hat{a}(w) - \Sigma w \|_{2} \leq \Delta_{\Sigma} \| w \|_2 \, .
$$
Assume that $ \eta \leq 1/\lambda_{\max} $ and let the number of {weights updates $S$ satisfy $ S \eta \Delta_{\Sigma} \leq 1$}.  Then,
\[
    R(w_{S}; \Sigma) - R(w^{*}; \Sigma) \lesssim \max\{ \| w_{0} - w^{*}\|, \| w^{*}\|\}^{2} \left( \frac{1}{\eta S} + \eta \Delta_{\Sigma}^{2} S \right) .
\]
{In particular, }for the optimal choice $ S \sim 1 / (\eta \Delta_{\Sigma}) $, we have
\[
    R(w_{s}; \Sigma) - R(w^{*}; \Sigma) \lesssim \max\{ \| w_{0} - w^{*}\|, \| w^{*}\|\}^{2} \Delta_{\Sigma} .
\]

When the solution to $\min_{w^{\T}\mathbf{1} = 1} R(w; \Sigma)$ is not unique, it is sufficient to pick any $ w^{*} \in \mathrm{Argmin}_{w^{\T} 1 = 1} R(w; \Sigma) $ and the result still holds.
\end{lemma}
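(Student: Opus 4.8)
The plan is to read the recursion as an \emph{inexact} projected gradient descent for the smooth convex objective $R(\,\cdot\,;\Sigma)$, whose gradient is $\nabla R(w;\Sigma)=\Sigma w$ and whose smoothness constant is $\lambda_{\max}$. Writing $g_{s-1}=\hat a(w_{s-1})$ and $e_{s-1}=g_{s-1}-\Sigma w_{s-1}$, the hypothesis gives $\|e_{s-1}\|\le \Delta_\Sigma\|w_{s-1}\|$, so every step uses the true gradient corrupted by an additive error controlled \emph{multiplicatively} by the norm of the current iterate. The whole argument therefore hinges on first showing that the iterates never leave a ball of radius comparable to $M\myeq\max\{\|w_0-w^*\|,\|w^*\|\}$, and only then feeding this a priori bound into a telescoping estimate. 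Throughout I use that $\Pi_1$ is a projection onto a convex set, hence non-expansive, and fixes every feasible point, in particular $\Pi_1 w^*=w^*$, and that $\phi_s\myeq R(w_s;\Sigma)-R(w^*;\Sigma)\ge 0$ by optimality of $w^*$.

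The workhorse is the three-point inequality for the projection step. From $w_s=\Pi_1[w_{s-1}-\eta g_{s-1}]$ and the variational characterization of the projection, for any feasible $u$ one has $\eta\langle g_{s-1},w_s-u\rangle\le \tfrac12(\|w_{s-1}-u\|^2-\|w_s-u\|^2-\|w_{s-1}-w_s\|^2)$. Substituting $g_{s-1}=\Sigma w_{s-1}+e_{s-1}$, using convexity to lower bound $\langle\Sigma w_{s-1},w_{s-1}-u\rangle$, and invoking the descent lemma together with $\eta\le 1/\lambda_{\max}$ to absorb the $\|w_{s-1}-w_s\|^2$ term, yields with $u=w^*$ and $D_s\myeq\|w_s-w^*\|$ the per-step estimate
\[
\eta\phi_s+\tfrac12 D_s^2\le \tfrac12 D_{s-1}^2+\eta\Delta_\Sigma\|w_{s-1}\|\,D_s .
\]
First I would extract boundedness: dropping the non-negative $\eta\phi_s$ and solving the resulting quadratic inequality in $D_s$ gives $D_s\le D_{s-1}+2\eta\Delta_\Sigma\|w_{s-1}\|\le(1+2\eta\Delta_\Sigma)D_{s-1}+2\eta\Delta_\Sigma\|w^*\|$, using $\|w_{s-1}\|\le D_{s-1}+\|w^*\|$. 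Unrolling this linear recursion and bounding $(1+2\eta\Delta_\Sigma)^S\le e^{2\eta\Delta_\Sigma S}\le e^2$ thanks to the standing assumption $S\eta\Delta_\Sigma\le1$ shows $D_s\lesssim M$, hence $\|w_{s-1}\|\lesssim M$, for every $s\le S$.

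With the a priori bound in hand I sum the per-step estimate over $s=1,\dots,S$; the telescoping of the $D_s^2$ terms and the crude bound $\|w_{s-1}\|D_s\lesssim M^2$ give $\tfrac1S\sum_{s=1}^S\phi_s\lesssim \tfrac{M^2}{\eta S}+\Delta_\Sigma M^2$. It then remains to pass from this averaged guarantee to the last iterate $w_S$. Since the gradient error does not vanish, exact monotonicity of the objective fails, but an approximate version survives: taking $u=w_{s-1}$ in the three-point inequality and combining with the descent lemma yields $\phi_s\le \phi_{s-1}+\tfrac{\eta}{2}\|e_{s-1}\|^2\le \phi_{s-1}+\tfrac{\eta}{2}\Delta_\Sigma^2\|w_{s-1}\|^2$, so the objective can increase by at most $\mathcal O(\eta\Delta_\Sigma^2 M^2)$ per step. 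Averaging $\phi_S\le \phi_s+(S-s)\,\mathcal O(\eta\Delta_\Sigma^2 M^2)$ over $s$ and inserting the averaged bound gives $\phi_S\lesssim \tfrac{M^2}{\eta S}+\Delta_\Sigma M^2+\eta\Delta_\Sigma^2 S\,M^2$.

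Finally, under $S\eta\Delta_\Sigma\le1$ one has the elementary inequality $\Delta_\Sigma\le \tfrac1{\eta S}$, whence $\tfrac1{\eta S}+\Delta_\Sigma$ and $\tfrac1{\eta S}+\eta\Delta_\Sigma^2 S$ agree up to an absolute constant; this lets me rewrite the bound in the stated form $R(w_S;\Sigma)-R(w^*;\Sigma)\lesssim M^2(\tfrac1{\eta S}+\eta\Delta_\Sigma^2 S)$, and the choice $S\sim1/(\eta\Delta_\Sigma)$ balances the two terms to yield $\lesssim M^2\Delta_\Sigma$. The extension to a non-unique minimizer is immediate, since every inequality above refers to an arbitrary but fixed $w^*\in\mathrm{Argmin}_{w^\T\onev=1}R(w;\Sigma)$ and uses only its feasibility and the value $R(w^*;\Sigma)$. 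I expect the a priori boundedness of the iterates to be the main obstacle: it is precisely the step that couples the multiplicative error assumption to the budget constraint $S\eta\Delta_\Sigma\le 1$, and without it neither the telescoping nor the approximate-descent estimate can be closed.
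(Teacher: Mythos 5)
Your proof is correct, but it reaches the last-iterate bound by a genuinely different route than the paper in its second half. For the a priori boundedness of the iterates, you and the paper do essentially the same thing with different machinery: the paper exploits the affine structure of the constraint via $\Pi_1(x+y)=\Pi_1 x+\Pi_0 y$ and the contraction $I-\eta\Pi_0\Sigma\Pi_0$ to get $\|w_{s+1}-w^*\|\le(1+\eta\Delta_\Sigma)\|w_s-w^*\|+\eta\Delta_\Sigma\|w^*\|$, while you extract the same linear recursion (with a harmless factor $2$) from the quadratic in $D_s$ implied by the three-point inequality; both close the loop with $(1+c\,\eta\Delta_\Sigma)^S\le e^c$ under $S\eta\Delta_\Sigma\le1$. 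After that the arguments diverge. The paper stays with the last iterate throughout: it combines a one-step descent inequality $R^*(w_{s+1})-R^*(w_s)\le-\tfrac{\eta}{2}\|\Pi_0\Sigma w_s\|^2+\tfrac{\eta}{2}\|\Delta(w_s)\|^2$ with convexity ($\delta_s\le M\|\Pi_0\Sigma w_s\|$) to get the nonlinear recursion $\delta_{s+1}\le\delta_s-\tfrac{\eta}{4M^2}\delta_s^2+\eta\Delta_\Sigma^2M^2$, which it solves by the shifted-sequence trick ($\alpha_s=\max\{0,\delta_s-s\eta\Delta_\Sigma^2M^2\}$ and inverse telescoping $1/\alpha_t\ge\tfrac{\eta}{4M^2}t$), directly yielding $\delta_S\le\tfrac{4M^2}{\eta S}+S\eta\Delta_\Sigma^2M^2$. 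You instead telescope the three-point inequality into an ergodic bound $\tfrac1S\sum_s\phi_s\lesssim\tfrac{M^2}{\eta S}+\Delta_\Sigma M^2$, then transfer it to the last iterate via approximate monotonicity $\phi_s\le\phi_{s-1}+\tfrac{\eta}{2}\Delta_\Sigma^2\|w_{s-1}\|^2$, absorbing the extra $\Delta_\Sigma M^2$ term through $\Delta_\Sigma\le1/(\eta S)$ — all steps check out, including the passage from the average to $\phi_S$. What each approach buys: yours never uses the explicit form of the projection, only its variational characterization, so it extends verbatim to any closed convex feasible set (and, implicitly, to any smooth convex objective with multiplicatively bounded gradient error), and it cleanly separates the regret-type estimate from the last-iterate conversion; the paper's argument is tied to the affine constraint but obtains the last-iterate bound in one pass with explicit constants and without the average-to-last-iterate detour. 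One small point shared by both proofs: the step with $u=w_{s-1}$ (respectively, the paper's identity $w_s-w^*=\Pi_0(w_s-w^*)$) tacitly requires the initial point to be feasible, $w_0^{\T}\onev=1$, which is consistent with the corollary's choice $w_0=\onev/N$ but worth stating.
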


Once again, we plug our estimator $ \hat{a}_{\delta}(w)  $ into the update rule. In addition, we take the initial approximation to be {an} EW portfolio. Namely, our sequence {is} as follows
\begin{equation}\label{a_hat_delta_updates}
    w_{0} = N^{-1} \mathbf{1}, \qquad w_{s} = \Pi_{1} [w_{s-1} - \hat{a}_{\delta}(w_{s-1})],
    \qquad s= 1, \dots, S \, .
\end{equation}

\begin{corollary}
Suppose that Assumption~\ref{bounded_kurtosis_assumptino} holds. Take $ \eta = 1/\lambda_{\max} $ and $ S \sim T (\mathbf{r}(\Sigma) \log \mathbf{r}(\Sigma) + \log(1/\delta))$, and set $ \hat{w}_{\delta} = w_{S} $. Then, with probability at least $1- \delta$,
\[
    R(\hat{w}_{\delta}; \Sigma) - R(w^{*}; \Sigma) \lesssim \| \Sigma \| \| w^{*}\| ^{2}  \sqrt{\frac{\mathbf{r}(\Sigma) \log \mathbf{r}(\Sigma) + \log(1/\delta)}{T}} \, .
\]
\end{corollary}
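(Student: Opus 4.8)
The plan is to obtain the corollary by feeding the uniform estimator-error bound of Proposition~\ref{a_hat_prop} into the deterministic convergence guarantee of Lemma~\ref{lemma_ill_cond}. Concretely, I would set
\[
\Delta_{\Sigma} = C \| \Sigma \| \sqrt{\frac{\mathbf{r}(\Sigma) \log \mathbf{r}(\Sigma) + \log(1/\delta)}{T}},
\]
where $C$ is the absolute constant hidden in Proposition~\ref{a_hat_prop}, and condition on the single event of probability at least $1 - \delta$ on which $\| \hat{a}_{\delta}(w) - \Sigma w \| \leq \Delta_{\Sigma} \| w \|$ holds \emph{simultaneously for all} $w \in \R^{N}$. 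On this event every hypothesis of Lemma~\ref{lemma_ill_cond} is met deterministically, so its conclusion applies to the sequence \eqref{a_hat_delta_updates} pathwise.

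First I would check the two structural requirements of the lemma. Since $\Sigma$ is PSD we have $\lambda_{\max} = \| \Sigma \|$, so the choice $\eta = 1/\lambda_{\max}$ meets $\eta \leq 1/\lambda_{\max}$ with equality. Taking $S$ to be the optimal value $S \sim 1/(\eta \Delta_{\Sigma})$ prescribed by the lemma makes $S \eta \Delta_{\Sigma} \sim 1$, so the budget constraint $S \eta \Delta_{\Sigma} \leq 1$ holds up to adjusting constants, and this places us in the ``optimal choice'' regime where
\[
R(w_{S}; \Sigma) - R(w^{*}; \Sigma) \lesssim \max\{ \| w_{0} - w^{*} \|, \| w^{*} \| \}^{2} \, \Delta_{\Sigma}.
\]

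The only genuinely new estimate is replacing the $\max$ factor by $\| w^{*} \|^{2}$, and here the initialization $w_{0} = N^{-1}\onev$ pays off. Any feasible $w$ satisfies $1 = w^{\T}\onev \leq \| w \| \sqrt{N}$ by Cauchy--Schwarz, hence $\| w \| \geq 1/\sqrt{N} = \| w_{0} \|$; thus the equal-weight portfolio is the minimal-norm feasible point, giving $\| w_{0} \| \leq \| w^{*} \|$ and therefore $\| w_{0} - w^{*} \| \leq \| w_{0} \| + \| w^{*} \| \leq 2 \| w^{*} \|$. Consequently $\max\{ \| w_{0} - w^{*} \|, \| w^{*} \| \}^{2} \leq 4 \| w^{*} \|^{2}$, and substituting $\Delta_{\Sigma}$ yields exactly
\[
R(\hat{w}_{\delta}; \Sigma) - R(w^{*}; \Sigma) \lesssim \| \Sigma \| \, \| w^{*} \|^{2} \sqrt{\frac{\mathbf{r}(\Sigma) \log \mathbf{r}(\Sigma) + \log(1/\delta)}{T}}.
\]

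The step I expect to require the most care is the probabilistic bookkeeping rather than any single inequality: because the iterates $w_{1}, \dots, w_{S}$ are functions of the same data used to build $\hat{a}_{\delta}$, they are random and mutually dependent, so one cannot simply union-bound a per-direction guarantee along the trajectory. The resolution is to invoke Proposition~\ref{a_hat_prop} in its \emph{uniform} form, which controls $\| \hat{a}_{\delta}(w) - \Sigma w \|$ for all $w$ at once on a single event; the deterministic Lemma~\ref{lemma_ill_cond} then consumes this event with no further probabilistic argument. A secondary point worth stating explicitly is that the non-uniqueness clause of the lemma lets us pick any minimizer $w^{*}$, which is exactly what is needed since $\Sigma$ may be degenerate in this ill-conditioned regime.
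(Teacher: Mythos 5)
Your proposal is correct and matches the paper's own proof: the paper likewise substitutes the uniform bound of Proposition~\ref{a_hat_prop} into Lemma~\ref{lemma_ill_cond} and controls the initialization via $\|w^*-w_0\|^2 \le 2\|w^*\|^2+2\|w_0\|^2$ together with $\|w_0\|^2 = (\mathbf{1}^\T w^*)^2/N \le \|w^*\|^2$, which is exactly your Cauchy--Schwarz minimal-norm observation. Your implicit use of the lemma's optimal choice $S \sim 1/(\eta\Delta_\Sigma) \sim \sqrt{T/(\mathbf{r}(\Sigma)\log\mathbf{r}(\Sigma)+\log(1/\delta))}$ is also the right reading: the corollary's displayed $S \sim T(\mathbf{r}(\Sigma)\log\mathbf{r}(\Sigma)+\log(1/\delta))$ appears to be a typo, since that value would violate the lemma's constraint $S\eta\Delta_\Sigma \le 1$.
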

\begin{proof}
Simply substitute the bound \eqref{a_bound} into Lemma~\ref{lemma_ill_cond}. We also notice that $ \| w^{*} - w_0\|^{2} \leq 2\| w^{*}\|^{2} + 2\|w_0\|^{2} $, and that $ \| w^{*}\|^{2} \geq (\mathbf{1}^{\T} w^{*})^{2} / N  = \| w_{0}\|^{2} $.
\end{proof}

\begin{remark}
We remark that the scaling value $ \| \Sigma \| \| w^{*}\|^{2} $ is only an upper bound on the optimal risk $ R(w^{*}; \Sigma) = \frac{1}{2} (w^{*})^{\T} \Sigma w^{*}$ and we cannot guarantee a ratio-type bound of the form $ R(\hat{w}_{\delta};\Sigma) = (1 + o(1))R(w^{*}; \Sigma)  $. However, this is not uncommon. For instance, \cite{fan2012vast} {show} that a portfolio with GEC {(Gross Exposure Constraints)} constraints $ \sum_{i} |w_i| \leq C $ satisfies,
\[
    R(\hat{w}; \Sigma) - R(w^{*}; \Sigma) \leq (1 + C)^{2} \max_{ij} | \hat{\Sigma}_{ij} - \Sigma_{ij}|,
\]
where one typically has a bound $\max_{ij} | \hat{\Sigma}_{ij} - \Sigma_{ij}| \lesssim \sqrt{(\log N) / T}$. Our bound {is more beneficial when the optimal portfolio} is well-diversified (i.e., $\| w^{*}\|^{2} \sim 1/N$), even though we do not impose any restrictions on the selected portfolio. 
\end{remark}

\section{Evaluation of empirical results}
\label{sec:bench}
To test the performance of our approach, we apply it to two data sets of stocks. The first data set consists of 81 constituents of S\&P100 index (as on January 1, 2021) and covers time span from January 2, 2000 to December 31, 2020 summing up to 5282 daily log-returns. These 81 stocks  have a continuous return time-series over the period of our study.  The second data set consists of 600 random constituents of Russell3000 index as on January 1, 2021, period of analysis is limited by 11 years: from January 2, 2010 to December 31, 2020. The length of analyzed time series is 2768 observations. 

For the portfolio construction, we employ a rolling-window approach with monthly rebalancing. Specifically, we choose an estimation
window of length $T$ days starting on date $T + 1$, for each rebalancing period $l$ (${l = 1, \ldots, L}$, with $L$ the number of rebalancing periods) we use the data in the previous $T$ days to estimate
the parameters required to implement a particular strategy.  The input parameters are estimated using daily returns of the most recent 12 and 24 months, corresponding roughly to 252 (500) daily returns of past data (with the length of estimation windows   $T=252$ and $T = 500$). Results for the window length 24 months are discussed in  \refsec{sec:emp} and summary tables for the length window 12 months are provided in \refsec{sec:perf}. Thus, the out-of-sample period for the S\&P100 data set starts on January 2, 2002 with 4781 observations, which corresponds to the number of rebalancing periods $L = 216$, and for the Russell3000 data set --- on January 3, 2012 with 2265 out-of-sample observations corresponding to $L = 108$. The source for both data sets is Thompson Reuters.

 \subsection{Benchmark portfolios}
Here we present the empirical results for GMV portfolio and evaluate its relative performance.
The allocation rules included into the empirical study with corresponding reference and abbreviations are listed in \reftab{Benchmark_portfolios}.

\paragraph{Equally weighted (EW).}  \cite{demiguel2009portfolio} 
argue that a naive allocation strategy with weights $w_i = 1/N$ is hard to outperform in practice. It is often used as a benchmark for comparative analysis. 

\paragraph{Sample-based Global minimum portfolio (GMV).} This is the most straightforward way to GMV optimization. The sample covariance matrix $\hat{\Sigma}$ is plugged into the objective in \eqref{GMV_opt}. We should note that this strategy is only included for S\&P100 data set, since for the Russell3000 we have $N > T$, and the sample covariance matrix is not invertible.   

\paragraph{Global minimum portfolio with short-sale constraint (GMV\_long).} This portfolio is a sample-based GMV portfolio with only long positions allowed. This means that GMV objective corresponding to the empirical covariance is optimized subject to the constraints $ w_j \geq 0 $.

\paragraph{Global minimum portfolio  with linear shrinkage estimator (GMV\_lin).} \cite{ledoit2004well}  propose an asymptotically optimal convex linear combination of the sample covariance matrix $\hat{\Sigma}$ with the identity matrix. Optimality is meant with respect to a quadratic loss function, asymptotically, as the number of observations and the number of assets go to infinity together.  \cite{ledoit2004well} use as a covariance matrix estimator as a convex linear combination of the sample covariance matrix and the identity matrix (shrinkage target) as follows:
\begin{equation*}
    \hat{\Sigma}_{shrink} =\rho I+(1 - \rho) \hat{\Sigma},
\end{equation*}
where $\rho$ is the shrinkage intensity parameter and $\hat{\Sigma}$ is the sample covariance matrix. \href{https://www.econ.uzh.ch/en/people/faculty/wolf/publications.html}{Their R code} is used in this horse race exercise.
\paragraph{Global minimum portfolio with non-linear shrinkage estimator (GMV\_nlin).} 

\cite{ledoit2017nonlinear} use the spectral decomposition for the empirical covariance 
\begin{equation*}
    \hat{\Sigma} \myeq U \widehat{D} U^{\T} 
\end{equation*}
where $\widehat{D} \myeq \operatorname{diag}\left(\widehat{d} \left(\lambda_{ 1}\right), \ldots, \widehat{d}\left(\lambda_{N}\right)\right)$, where $\lambda_{1},  \dots, \lambda_{N}$ are the sample eigenvalues, and $\hat{d}$ is some nonlinear cutoff threshold based on value of $N/T$ and the magnitude of the eigenvalues $\lambda_j$.

\begin{table}
\centering
\scalebox{1}{\begin{tabular}{lll}
 \toprule
Model & Reference & Abbreviation\\
 \midrule
Equally weighted	&	\cite{demiguel2009optimal} 	&	\textcolor{EW}{EW}	\\
Robust Global Minimum Variance &		&	\textcolor{GMV_{robust}}{GMV\_{robust}}	\\
GMV with sample covariance &	\cite{merton1980estimating} &	\textcolor{GMV}{GMV}		\\
GMV with linear shrinkage cov estimator &	\cite{ledoit2004well} &	\textcolor{GMV_lin}{GMV\_{lin}}		\\
GMV with non-linear shrinkage cov estimator &	\cite{ledoit2017nonlinear} &\textcolor{GMV_nlin}	{GMV\_nlin}		\\
GMV with short sale constraint &	\cite{jagannathan2003risk} & \textcolor{GMV_long}	{GMV\_long}	\\	
Index &	S\&P100 and Russell3000 & \textcolor{Index}	{Index}	\\
 \bottomrule
\end{tabular}}
\caption{Benchmark portfolios}
\label{tab:Benchmark_portfolios}
\end{table}

\subsection{Performance measures}
 We report the following five out-of-sample performance measures for each benchmark portfolio rule. 
 \begin{itemize}
\item \emph{Turnover (TO)}
The main practical objective of the introduced methodology is stabilizing of portfolio weights, aiming at reduction of transaction costs. To assess the impact of potential trading costs associated with portfolio rebalancing, we calculate two measures for turnover. First, following \cite{demiguel2009optimal} and \cite{demiguel2009portfolio},  we present  Turnover, which is defined as an average sum of the absolute value of the rebalancing trades across the $N$ assets of the investment universe and over the $L$ rebalancing months \eqref{TO}. 
\begin{equation}\label{TO}
TO=L^{-1} \sum_{l=1}^{L} \sum_{j=1}^{N}\left|\hat{w}_{ j, l+1}-\hat{w}_{j, l+}\right|\, .
\end{equation}
where $\hat{w}_{j,l}$ and $\hat{w}_{j,l+1}$ are the weights assigned to the asset $j$ for rebalancing periods $l$ and $l + 1$ and $\hat{w}_{j,l+}$ denotes its weight just before rebalancing at ${l + 1}$. Thus, one accounts for the price change over the period, as one needs to execute trades to rebalance the portfolio towards the ${w_l}$ target. High turnover will imply significant transaction costs; consequently, the lower TO of a strategy, the less its performance would be harmed by non-zero transaction costs.
Whereas, for market makers, the reduced turnover is not a crucial parameter in a decision-making process, for buy-side investors, we believe, that a lower turnover could be an important advantage.

\item \emph{Target Turnover (TTO)}

Further, following \citet{petukhina2021investing} we also calculate a target turnover, which is constructed as follows: 

\begin{equation*}
  {TTO} = L^{-1}\sum_{l=1}^{L}{\sum_{j=1}^N}\vert
  \hat{w}_{j,l+1}-\hat{w}_{j,l}\vert \, .
  \label{target-turnover}
  \end{equation*}

In contrast to equation \eqref{TO}  this definition of turnover implies by construction a value of zero for the EW portfolio. We provide this measure to focus on modifications of the target portfolio weights due to active management decisions and cleaned from the influence of assets’ price dynamics. 

We calculate the following four performance measures for both, gross and net, returns series. Gross returns are raw returns that do not take transaction costs into account and net returns with subtracted fees. When the portfolio is rebalanced at time $l+1$, it gives rise to a trade in each asset of magnitude $\left|\hat{\mathrm{w}}_{k, j, l+1}-\hat{\mathrm{w}}_{k, j, l^{+}}\right|$. Thus, to calculate the net return we reduce the return by the cost of such a trade over all assets, given by
$$tc = c\sum_{j=1}^N\left|\hat{\mathrm{w}}_{k, j, l+1}-\hat{\mathrm{w}}_{k, j, l^{+}}\right|,$$
 where $c$ is the proportional transaction cost. In our empirical study we set a proportional transactions cost of 50 basis points per transaction. The same level was accepted in e.g., \cite{demiguel2009optimal}, \cite{fays2011}.
     \item \emph{Cumulative wealth (CW)}  
     
\emph{CW} generated by each benchmark strategy  with initial investment $W_0 =  1 $ is computed as follows for gross returns series:
\begin{equation*}
  W_{l+1} = W_{l} + {\hat{w}_{l}^{\top}X_{l+1}}. 
  \label{cumret}
\end{equation*}
CW for net returns series:
\begin{equation*}
  W_{l+1} = W_{l} + {\hat{w}_{l}^{\top}X_{l+1}} - tc. 
\end{equation*}
 \item \emph{Standard Deviation (SD)}

We compute $SD$ of out-of-sample daily
 returns series.
\item \emph{Sharpe ratio (SR)}
\begin{equation*}
  SR = \frac{{AV}}{{SD}},
  \label{SRatio}
\end{equation*}
where $AV$ and $SD$  are average out-of-sample returns and their standard deviations for each strategy are calculated based on daily out-of-sample returns series.
 We also test, whether GMV\_{robust} delivers a better risk measured by SD, and risk-adjusted performance, measured by SR, than other portfolios at a level that is statistically significant. A two-sided $p$-values for the null hypothesis of equal SD and SR are obtained by  HAC method described in   \citet[Section 3.1]{ledoit2011robust} for SD and in  \citet[Section 3.1]{ledoit2008robust}  for SR.
\item \emph{Calmar Ratio (CR)}

We also compare  \emph{(CR)}, metric for risk-adjusted return, which is widely used by practitioners due to its asymmetry property: focus on the maximum drawdown not on volatility like in $SR$. 
\begin{equation*}
  CR = \frac{{252\times AV}}{{D_{P}}},
\end{equation*}
where $AV$ is the average out-of-sample return multiplied by 252 to annulize and $D_P$ is the maximum drawdown of the portfolio returns.

 \end{itemize}

Since the focus of this research is the reduction of portfolio weights' fluctuations, following \cite{ledoit2017nonlinear} we also compute the following five characteristics of weights' vectors $\hat w_{t}$ averaged through number of rebalancing periods. Thus, we calculate \emph{minimum weight (min)} for every benchmark strategy as follows:

 \begin{equation*}
   \text{min} =   \frac{1}{T_{\max}-T}{\sum_{t=T}^{T_{\max}}}\min(\hat{w}_{t})   \, .
  \end{equation*}
 We similarly compute maximum weight \emph{(max)},  standard deviation \emph{(sd)}, and  range of weights~\emph{(max-min)}.

In addition, we provide MAD from EW portfolio \emph{(mad-ew)}, which is defined as:

  \begin{equation*}
   \text{mad-ew}  =  \frac{1}{T_{\max}-T}\sum_{t=T}^{T_{\max}}{N}^{-1}\sum_{j=1}^{N}\left|\widehat{w}_{j,t}-\frac{1}{N}\right| \, . 
  \end{equation*}

\section{Empirical study}
\label{sec:emp}

\subsubsection*{Discussion of S\&P100  data set results}
First, we discuss portfolio weight stability, since it is the main focus of the research. \reffig{Weights_SP} demonstrates the dynamics of weights for S\&P100. It can be observed that weights of plug-in GMV portfolio are characterized by a lot of extremes in comparison with all other policies. The least dispersed weights are observed for, introduced in this paper, GMV\_{robust} approach. This visual result is confirmed by descriptive statistics of portfolio weights reported in the \reftab{Weights_out-of-sampleSP100}.  It can be found that the average range of weights for GMV\_{robust} 0.07 is the lowest one, what is almost four times less than the range of GMV\_lin, GMV\_nlin and GMV\_long and five times less than plug-in GMV policy; $mad-ew$ also is the lowest for GMV\_{robust} strategy, pointing out the more balanced distribution of weights. \reftab{Performance_SP1002} reports these results, which can be summarized as follows. 

The two main characteristics of interest for this research would be \emph{Turnover} and \emph{Target Turnover}. As expected, the best performing policy in this dimension is GMV\_long with imposed non-negative constraints; it requires on average almost 13\% (TO) of trading volume to rebalance the portfolio. GMV\_{long} is followed by EW with 20\% and GMV\_{robust} with 25\%. The highest turnover is reached by GMV with almost 71\% of portfolio value to rebalance the portfolio to target weights $\hat{w_t}$. The ranking of strategies via $TTO$ stays unchanged. The sample conducted t-test, confirms significant difference of GMV\_{robust} $TO$ and $TTO$ in comparison with other benchmarks.

Naturally, for investors, Cumulative Wealth ($CW$) is of high interest as a measure of performance for the period considered. The gross $CW$  for portfolio benchmarks does not vary considerably, ranging from 256\% for GMV\_{long} portfolio to 273\% for EW. However, the inclusion of transaction fees changes the  results:  the best performing EW portfolio gains 249\%, followed by GMV\_{robust} with CW 242\%. The worst performing portfolio is GMV with 188\% of CW. The evolvement of gross CW of all benchmark strategies for the considered period is plotted in \reffig{CumWealthRussell3000}. 

Shrinkage estimators exhibit the lowest risk, measured by SD and  tests conducted point out the significance of this difference with GMV\_robust. In terms of risk-adjusted performance, the winning strategies for Gross returns  are shrinkage portfolios and GMV with SR 5.71\% and 5.65\% respectively. However, for net returns they lose half of their performance, ending up with 2.3\%, whereas SR of our approach stays more stable changing from 4.26\% to 3.2\%. In addition, conducted tests of difference significance  support the hypothesis of equal SRs, although it can be noticed that the difference is becoming more prominent for net returns. 

Finally, GMV\_{robust} dominates other benchmarks in terms of Calmar ratio. The drawdowns often determine whether a buy-side investor can keep an investment or will have to unwind and thus miss subsequent recoveries, what makes it an important risk measure for buy-side investors and managers. 

\subsubsection*{Discussion of Russell3000  data set results}
Outcomes of weights stability analysis are consistent with ones described for S\&P100 data set. GMV\_{robust} weights are characterized by harmonized weights without extreme short or long values. It is visible in the \reffig{Weights_Russell} and in the \reftab{Weights_out-of-sampleRussell3000}: GMV\_{robust} $MAD-EW$ and $max-min$ are the lowest in comparison with benchmark portfolios (excluding EW). 

In terms of accumulated wealth, GMV\_{robust} for large portfolios performs very close to shrinkage estimators, \reftab{Performance_Russell30002} summarizes investment performance characteristics. 
Thus, GMV\_{robust} gains in the end of the period 210 \% of initial value while GMV\_lin and GMV\_nlin 209\% and 201.9\%. But considering non-zero trading fees would change the rank drastically, making the best performing benchmark Index itself, followed by GMV\_{robust}. The findings for other risk measures stay in line with findings for S\&P100: 
\begin{itemize}
    \item  the less risky portfolios with the lowest SD are shrinkage portfolios, and the difference is significant in relation to  GMV\_{robust}.
 \item Shrinkage estimators demonstrate the highest SR for gross returns, but for the net returns  GMV\_{robust} overperforms, although difference tests do not prove the significance of the result. 
 \item GMV\_{robust} dominates other strategies in terms of  CR for both net and gross returns.
  \end{itemize} 

Thus, according to outcomes of empirical experiments we can claim that GMV\_{robust} portfolio policy achieves its goal and substantially reduces fluctuations of weights, leading to the lowest level of accumulated transaction costs. The risk adjusted performance is equal or slightly lower than shrinkage benchmarks and higher than constrained rules. Our approach is of specific interest for invetors and managers who focus on the worst drawdowm. This conclusion stays robust for small and large portfolios. 

\tab{Performance_SP1002}{\footnotesize 
{ 
Out-of-sample performance measures of benchmark portfolios, 81 stocks of S\&P100, monthly rebalancing: TTO, target turnover; TO, Turnover;  CW, cumulative Wealth; SD, standart deviation; SR, Sharpe ratio; CR, Calmar ratio. All estimates are obtained from daily values (4781 out-of-sample returns) for both gross and net returns. Time period: 20020101 - 20201231. The difference tests for the turnovers are obtained from a sample $t$-test. The difference test on the Sharpe ratio and Variance used the approach of \cite{ledoit2008robust} and \cite{ledoit2011robust}, for which the R code is available at M. Wolf's website.
$p$-values are reported in parentheses with respect to the GMV\_{robust} portfolio.} \quantlet{RobustM_PerformanceSP100}{RobustM\_PerformanceSP100} }

\tabf{Weights_out-of-sampleSP100}{\footnotesize{Average characteristics of the weight vectors of   GMV portfolios, 81 stocks of S\&P100, monthly rebalancing. Time period: 20020101 - 20201231}
\quantlet{RobustM_PerformanceSP100}{RobustM\_PerformanceSP100}}
\tabf{Performance_Russell30002}{\footnotesize 
{ Out-of-sample performance measures of benchmark portfolios, 600 stocks of Russell3000, monthly rebalancing: TTO, target turnover; TO, Turnover, TC, transaction costs;  CW, cumulative Wealth; SD, standart deviation; SR, Sharpe ratio; CR, Calmar ratio. All estimates are obtained from daily values (2265 out-of-sample returns) for both gross and net returns. Time period: 20120101 - 20201231. The difference tests for  turnovers are obtained from a sample $t$-test. The difference test on the Sharpe ratio and Variance used the approach of \cite{ledoit2008robust} and \cite{ledoit2011robust}, for which the R code is available at  M. Wolf's website. $p$-values are reported in parentheses with respect to the GMV\_{robust} portfolio.} 
\quantlet{RobustM_PerformanceRussell3000}{RobustM\_PerformanceRussell3000}}
\begin{figure}
    \centering
    \subfloat[\centering S\&P100 dataset]{{\includegraphics[width=6.5cm]{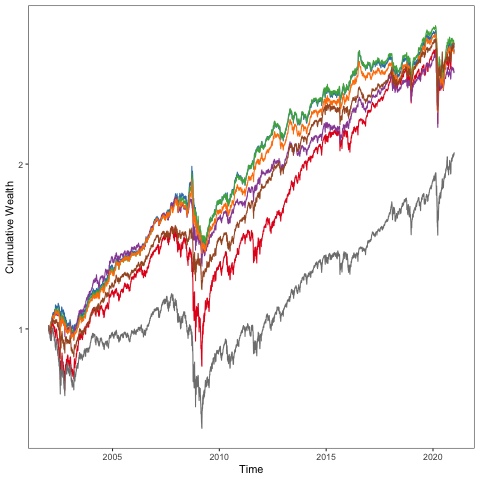} }}%
    \qquad
    \subfloat[\centering Russell3000 dataset]{{\includegraphics[width=6.5cm]{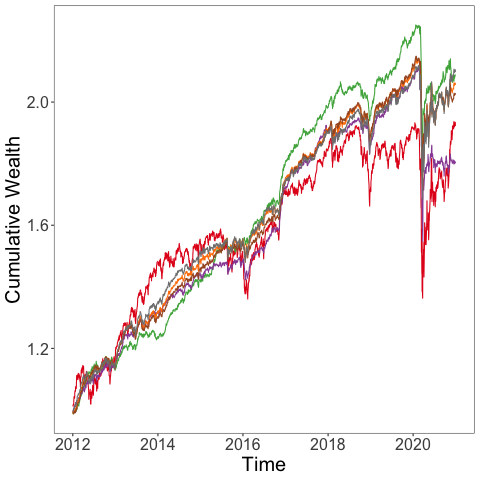} }}%
    \caption{\footnotesize {Cumulative wealth of benchmark portfolios \textcolor{EW}{EW}, 
\textcolor{GMV_lin}{GMV\_lin}, \textcolor{GMV_long}{GMV\_long}, \textcolor{GMV_nlin}{GMV\_nlin}, \textcolor{GMV_{robust}}{GMV\_{robust}}, \textcolor{Index}{Index}. Panel (a): 81 constituents of S\&P100, time period 20020101 - 20201231 (4781 daily returns);  panel (b)) 600 random constituents of Russell3000, time period 20120101 - 20201231 (2263 daily returns)}}
    \label{CumWealthRussell3000} \quantlet{RobustM_PerformanceRussell3000}{RobustM\_PerformanceRussell3000}
\end{figure}
\tab{Weights_out-of-sampleRussell3000}{Average characteristics of the weight vectors  of  GMV portfolios, 600 stocks of Russell3000, monthly rebalancing. Time period: 20020101 - 20201231 \quantlet{RobustM_PerformanceRussell3000}{RobustM\_PerformanceRussell3000}}

\section{Conclusion and discussion}

``Robustifying Markowitz'' has seen many attempts that are mostly based on robustifying the original simple inversion formula for exact determination of optimal GMV weights. In bypassing this ``error maximizing'' technique, we have presented a tool fixing the portfolio weights in a low cost re-balancing 
ballpark.  Using modern results from robust statistics, we have constructed an algorithm that provides a  computationally effective estimator for GMV Markowitz portfolios. We have shown that it suffices to utilize a PGD procedure to optimize the portfolio weights without estimating the covariance operator itself.  The focus on just the PGD updates significantly distinguishes our approach from the previous techniques.  
We have successfully derived almost Gaussian properties of this estimator in nice ($N/T$ small) and not-so-nice ($N/T$ big) condition cases. 

The weights developed with the robustified approach are less sensitive to deviations of the asset-return distribution from normality than those of the traditional minimum-variance policy. Empirical studies confirm that the proposed policies are indeed more stable and cost-reducing. The stability of the proposed portfolios makes them a feasible alternative to traditional portfolios.

The proposed toolbox improves the stability properties of weights, leading to better investment characteristics of allocation policies. Although inferior at the risk level measured by SD, our algorithm reaches a superior risk-adjusted performance (Sharpe and Calmar ratios) for net returns due to a substantial reduction of trading volume measured by turnover. Finally, these performance results are confirmed across small and large portfolios.  Even for dimensions of portfolio size larger than the length of the estimation window (e.g., the Russell3000 data), the above claim pertains.

\section*{Acknowledgments}

Wolfgang H\"ardle and Alla Petukhina gratefully acknowledge the financial support of the European Union's Horizon 2020 research and innovation program ``FIN-TECH: A Financial supervision and Technology compliance training programme" under the grant agreement No 825215 (Topic: ICT-35-2018, Type of action: CSA), the European Cooperation in Science \& Technology COST Action grant CA19130 - Fintech and Artificial Intelligence in Finance - Towards a transparent financial industry, the Deutsche Forschungsgemeinschaft's IRTG 1792 grant, Wolfgang H\"ardle - the Yushan Scholar Program of Taiwan, the Czech Science Foundation's grant no. 19-28231X / CAS: XDA 23020303. Nikita Zhivotovskiy was funded in part by ETH Foundations of Data Science (ETH-FDS).  We wish to thank three anonymous referees for their thoughtful comments and efforts toward improving our manuscript. We also gratefully acknowledge the comments and discussions from Valerio Poti, Natalie Packham, and Jörg Osterrieder, as well as participants of COST FinAI Annual Meeting in Bucharest, Research seminars "Mathematics" at HTW Berlin and the Department of Economics and Management, University of Pavia. We appreciate the editorial assistance of Elie Tamer and Terry Liu.

\bibliography{references}

\appendix

\section{Simulation study}\label{sec:sim}

In Section~\ref{sec:ill_cond}, we have shown the convergence properties of our iterative algorithm depending on the number of steps taken. Here, we use simulations to study these properties empirically. In addition, we demonstrate numerically that using a robust estimator of $ \Sigma w $ improves the confidence bounds for the GMV risk.

\subsubsection*{Convergence of the algorithm}

We first generate $T = 250$ independent observations from Gaussian distribution. To get closer to the real distribution, we use a covariance matrix estimated from the real dataset. 

We take a 250 days window from 2012-08-27 to 2013-08-26 of $N=82$ constituents of S\&P100, and compute their sample covariance denoted as $S_1$. The corresponding optimal GMV weights $ w_1 = (S_1^{-1} \mathbf{1}) / (\mathbf{1}^{\T} S_1^{-1} \mathbf{1}) $ are rather explosive and have the maximal absolute weight $0.24$, which is $ \approx 20 $ times bigger than for the equally weighted portfolio.

It is arguable whether the true optimal portfolio weights are explosive in reality. We additionally consider a modified covariance matrix, that is obtained by rotating the eigenbasis with minimal efforts, so that the vector of ones $ \mathbf{1} = (1, 1, \dots, 1)^{\T} $ belongs to the span of top $15$ eigenvectors. By taking $ S_2 = R S_1 R^{\T} $, where $R$ is a rotation matrix, we ensure that $ \| S_1 - S_2 \| / \| S_1\| \approx 0.13 $, while $ S_2^{-1} \mathbf{1} $ is not as explosive, and so are the optimal GMV weights $ w_2 = (S_2^{-1} \mathbf{1}) / (\mathbf{1}^{\T} S_2^{-1} \mathbf{1}) $. To be precise, the absolute maximal weight becomes $ 0.08 $, which is 3x smaller than for the original sample covariance.

\begin{figure}
     \centering
     \begin{subfigure}[b]{0.3\textwidth}
         \centering
         \includegraphics[width=\textwidth]{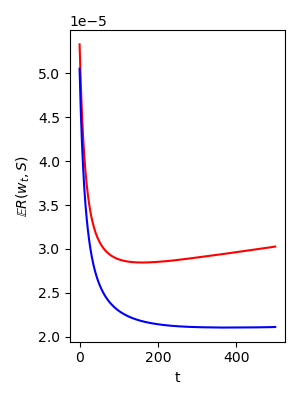}
         \caption{}
         \label{fig:convsim1}
     \end{subfigure}
     \hfill
     \begin{subfigure}[b]{0.3\textwidth}
         \centering
         \includegraphics[width=\textwidth]{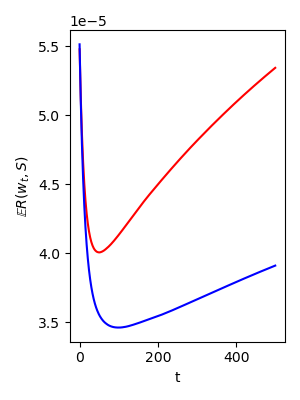}
         \caption{}
         \label{fig:convsim2}
     \end{subfigure}
     \hfill
     \begin{subfigure}[b]{0.3\textwidth}
         \centering
         \includegraphics[width=\textwidth]{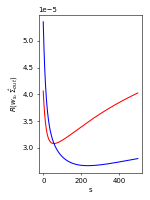}
         \caption{}
         \label{fig:convsim3}
     \end{subfigure}
        \caption{The graphs (a), (b) show simulated expected risk $ \E R(w_s, \Sigma) $ (red), expected in-sample risk $ \E R(w_s, \hat{\Sigma}) $ (blue) for $ \Sigma = S_1 $ (a) and $ \Sigma = S_2 $ (b). The graph (c) shows in-sample risk (blue) for the S\&P100 data between 2012-08-27 and 2013-08-26, and the out-of-sample risk (red) estimated using the time window between 2013-08-27 and 2014-01-17.}
        \label{fig:convsim}
        \quantlet{RobustM_Simulations}{RobustM\_Simulations} 
\end{figure}

In Figures~\ref{fig:convsim1}--\ref{fig:convsim3} the blue line shows the (expected) in-sample risk of $w_s$, the $s$th iteration of the algorithm, depending on the number of steps. The red line shows the population or out-of-sample risk. 
        In Figures~\ref{fig:convsim1} and \ref{fig:convsim2}, the generated data has Gaussian distribution with covariance $S_1$ and $S_2$, respectively (we take zero mean, since the estimator is translation invariant, and the risk measure does not depend on it). Notice that although the in-sample risk generally decreases with a growing number of steps, the population risk $ \E R(w_s, S_j) $ has a single minimum, which confirms the tradeoff stated in Lemma~\ref{lemma_ill_cond}, where a specific number of steps needs to be taken to achieve the bound. Using Figure~\ref{fig:convsim3} we can compare the simulated data with real, where the blue line represents one realization of the in-sample risk, using the data from 2012-08-27 to 2013-08-26, and the red line shows the out-of-sample risk that is calculated using the data from 2013-08-27 to 2014-01-17. This graph shows similar features as the first two. Notice that the curve is steeper in cases (b) and (c). Motivated by this we will use the covariance matrix $S_2$ for further simulations.

\subsubsection*{Heavy tailed simulation}

We further investigate the effect of using a robust estimator with heavy-tailed data. In particular, we want to compare the performance of a robust estimator against the classical one. Recall, that our algorithm consists of iterative updates,
\[
    w_0 = \mathbf{1} / N,
    \qquad
    w_{s} = \Pi_{1} [w_{s - 1} - \eta \hat{a}(w_{s - 1})],
\]
where $\Pi_{1}$ is defined in \eqref{Pi_1_definition}, and $ \hat{a}(w) $ is a robust estimator of $ \Sigma w $. What happens if we replace it with the standard empirical version $ \hat{\Sigma} w$? Below we demonstrate that using a robust estimator improves the high probability bounds.

For simulations, we construct a heavy-tailed distribution of the form $ X = S_2^{1/2} Y $, where
\begin{equation}
\label{heavy_mixture}
    Y \sim 0.999 \times \mathcal{N}(0, \mathbf{I}) + 0.001 \times D,
\end{equation}
i.e., with probability $0.999$, $Y$ is sampled from a standard normal distribution and, with a small probability $0.001$, from a heavy-tailed distribution $D$. In our example, $D$ is supported in $ \{ -1, 1\}^{82}$. To sample a vector $ Z \sim D$, we first take a random integer $k $ uniformly in the set $\{ 0, 1, \dots, N=82\}$. Then, we take a random subset $B$ of $ \{ 1, \dots, 82\} $ of size $k$, uniformly over all such subsets. Then, we assign $ Z_i = 1 $ for $i \in B$ and $Z_i = -1$ otherwise. Due to the symmetry, we have $ \E Z = 0 $, and it is straightforward to calculate that $ \mathrm{Var}(Z) = \tfrac{2}{3} \mathbf{I} + \tfrac{1}{3} \mathbf{1}\mathbf{1}^{\T} $. Then, the covariance matrix of $ X = S_2^{1/2} Y $ is as follows,
\[
    \mathrm{Var}(X) = S_2^{1/2} \left[ \tfrac{1}{3} 2.999 \times \mathbf{I} + \tfrac{1}{3} 0.001 \times \mathbf{1}\mathbf{1}^{\T} \right] S_2^{1/2} \, .
\]
The eigenvalues of the matrix in the middle are in the range $[0.999, 1.055]$, so that $\mathrm{Var}(X)$ is only a slight perturbation of the original covariance $S_2$.

\begin{figure}
     \centering
     \begin{subfigure}[b]{0.3\textwidth}
         \centering
         \includegraphics[width=\textwidth]{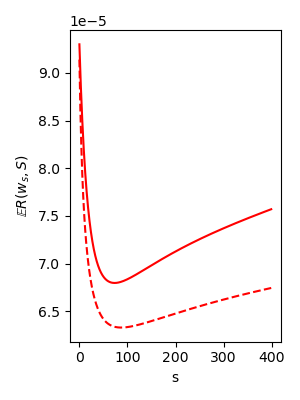}
         \caption{}
         \label{fig:heavy_mean}
     \end{subfigure}
     \hfill
     \begin{subfigure}[b]{0.3\textwidth}
         \centering
         \includegraphics[width=\textwidth]{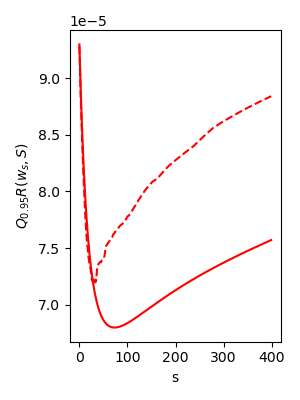}
         \caption{}
         \label{fig:heavy_95}
     \end{subfigure}
     \hfill
     \begin{subfigure}[b]{0.3\textwidth}
         \centering
         \includegraphics[width=\textwidth]{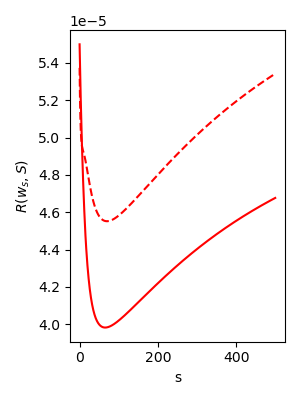}
         \caption{}
         \label{fig:contamination}
     \end{subfigure}
        \caption{Graph (a) shows  expected risk $ \E R(w_s, \Sigma) $  for the robust and non-robust (dashed) estimators. Graph (b) shows 95\% quantile of the risk in the same setting. Graph (c) shows the risk of robust and non-robust estimators with one contaminated observation.}
        \label{fig:heavy_all}
\quantlet{RobustM_Simulations}{RobustM\_Simulations} 
\end{figure}

We generate $200$ independent samples of size $250$ from this distribution. For each sample, we compute the true risk of $s$-th iteration, given that we know exactly the covariance matrix. Figure~\ref{fig:heavy_mean} shows the average risk of $w_s$ depending on the number of iterations, and Figure~\ref{fig:heavy_mean} shows its $95\%$ quantile, which are calculated by taking the mean and percentile of these $200$ independent simulations, respectively. The thick line represents the risk of the robust procedure, while the dashed line corresponds to the non-robust version of the algorithm with $ \hat{a}(w) $ replaced by $ \hat{\Sigma} w $. It is evident, that while there is no improvement on average, the robust estimator has significantly better tails. In addition, Figure~\ref{fig:contamination} shows what happens when we treat the heavy part of the distribution as contamination. We generate a single realization of a Gaussian sample and replace the first element with a random realization sampled from ~$D$. In this case, we calculate the risk using the original covariance matrix~$S_2$.

\section{Proofs}

\subsection{Proof of Proposition~\ref{a_hat_prop}}
\label{sec:proof_a_hat}

Before we proceed, let us recall some of the results and definitions from \citet{lugosi2019sub} and \citet{hopkins2020robust}. We start by giving the definition of {a combinatorial center. It is the central object in the original construction of \citeauthor{lugosi2019sub}, but the definition itself is due to \citeauthor{hopkins2020robust}}

\begin{definition}[Combinatorial center]
A point $\theta \in \R^{N}$ is called {an} $(r, \kappa)$-combinatorial center of $Y_1, \dots, Y_{\ell}$ if for all unit vectors $ v \in \R^{N}$, the inequality
\[
    |v^{\T}(Y_j - \theta)| \leq r
\]
takes place for at least $(1 - \kappa)\ell$ of indices $j = 1, \dots, \ell$.
\end{definition}

Essentially, \cite{lugosi2019sub} prove that for appropriately chosen $r_{\delta}$, the true mean is an $(r_{\delta}, 1/4)$-combinatorial center, with probability at least $1 - \delta$, where $Y_j$ are the bucket means (see their paper for exact definitions). The estimation strategy is then executed by what is called a \emph{median-of-means tournament}: one needs to pick an $(r, 1/4)$-combinatorial center with $r$ as small as possible. The deviation bound then follows by a simple triangle inequality. One difficulty of implementing this strategy computationally is the lack of control on how these subsets of indices behave for different directions $v \in \R^{N}$.

In addition, \cite{hopkins2020robust}  define the \emph{spectral center} as follows.

\begin{definition}[Spectral center]
For $\varepsilon \in (0, 1/2)$, denote 
$$ 
    \Delta_{\ell, \varepsilon} = \left\{ u \in \R^{\ell}: \; \sum_{j = 1}^{\ell} u_{j} = 1,\;\; 0 \leq u_{j} \leq 1 / \{\ell(1 - \varepsilon)\} \right\} \, .
$$
A point $\theta \in \R^{N}$ is called an $(r, \varepsilon)$-spectral center if there are weights $(u_1, \dots, u_{\ell}) \in \Delta_{\ell, \varepsilon}$ such that 
\[
    \left\| \sum_{j = 1}^{\ell} u_j (Y_j - \theta)(Y_j - \theta)^{\T} \right\| \leq r^2 .
\]
\end{definition}

It is straightforward to see that if $\theta$ is a $(r, \varepsilon)$-spectral center with minimal $r$, then it has the form
$
    \theta = \sum_{j = 1}^{\ell} u_{j} Y_{j}
$
for some $(u_1, \dots, u_{\ell}) \in \Delta_{\ell, \varepsilon}$, i.e., the solution should be a weighted mean of $Y_j$. The two definitions are ``equivalent'' in the  following sense.

\begin{lemma}\label{lem:spectr_comb}
Suppose that $\theta$ is an $(r, \kappa)$-combinatorial center. Then it is also a $(5r, 10\kappa)$-spectral center. Conversely, if $\theta$ is an $(r, \varepsilon)$-spectral center, then it is also a $(\sqrt{(1-\varepsilon)/\varepsilon} \, r, 2\varepsilon)$-combinatorial center.
\end{lemma}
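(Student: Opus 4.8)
The plan is to prove the two inclusions separately. The direction ``spectral $\Rightarrow$ combinatorial'' is an elementary weight-budget argument, whereas ``combinatorial $\Rightarrow$ spectral'' requires a minimax reduction followed by a randomized counting estimate. Throughout I abbreviate $Z_j = Y_j - \theta$, and for the second direction I tacitly work in the regime $\kappa < 1/20$ so that $\Delta_{\ell, 10\kappa}$ is defined.

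For the easy direction, suppose $\theta$ is an $(r, \varepsilon)$-spectral center with weights $u \in \Delta_{\ell, \varepsilon}$, so that $\sum_j u_j (v^\T Z_j)^2 \le r^2$ for every unit vector $v$. Fix such a $v$ and put $s = \sqrt{(1-\varepsilon)/\varepsilon}\, r$. I would argue by contradiction: if more than $2\varepsilon \ell$ indices satisfied $|v^\T Z_j| > s$, then, using the cap $u_j \le 1/\{\ell(1-\varepsilon)\}$, the total weight carried by the complementary ``good'' indices is at most $(1-2\varepsilon)/(1-\varepsilon)$, so the weight on the ``bad'' indices is at least $\varepsilon/(1-\varepsilon)$. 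Keeping only the bad indices already yields $\sum_j u_j (v^\T Z_j)^2 > \frac{\varepsilon}{1-\varepsilon}\, s^2 = r^2$, a contradiction. Hence at least $(1-2\varepsilon)\ell$ indices obey $|v^\T Z_j| \le s$, which is precisely the $(s, 2\varepsilon)$-combinatorial property.

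For the hard direction I would first rephrase the goal. Using $\|A\| = \max\{\Tr(AM): M \succeq 0,\ \Tr M = 1\}$, the spectral-center condition to be established, $\min_{u \in \Delta_{\ell, 10\kappa}} \|\sum_j u_j Z_j Z_j^\T\| \le (5r)^2$, is equivalent to $\min_u \max_M \sum_j u_j Z_j^\T M Z_j \le 25 r^2$. Since this objective is bilinear in $(u, M)$ over convex compact sets, I would invoke the classical minimax theorem to exchange the order and bound $\max_M \min_u \sum_j u_j Z_j^\T M Z_j$ instead. For a fixed admissible $M$, the inner minimum equals the average of the smallest $(1-10\kappa)\ell$ values of $a_j := Z_j^\T M Z_j$; therefore it suffices to show that for \emph{each} fixed $M$ at most $10\kappa\ell$ indices satisfy $a_j > 25 r^2$, as this forces the average of the remaining smallest values below $25 r^2$.

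This per-$M$ counting bound is the core of the argument and the main obstacle. The combinatorial hypothesis controls only the \emph{number} of large projections in each direction, not their magnitude, and the $\|Z_j\|$ are unbounded, so splitting $a_j$ over the eigenvectors of $M$, or passing through $\lambda_{\max}(\sum_{j \in G} Z_j Z_j^\T)$ for $G = \{j : a_j > 25 r^2\}$, fails because a few enormous outlier projections are never suppressed. To bypass this I would use a \emph{randomized} test vector: draw $v \sim \mathcal{N}(0, M)$, so that $\E \|v\|^2 = \Tr M = 1$ and $v^\T Z_j \sim \mathcal{N}(0, a_j)$. For $j \in G$ the standard deviation exceeds $5r$, so a Gaussian lower bound makes $\P(|v^\T Z_j| > 2r)$ at least a fixed constant, while Markov's inequality on $\|v\|^2$ gives $\P(\|v\| \le 2) \ge 3/4$. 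On the event $\{\|v\| \le 2\}$ the combinatorial property applied to $v/\|v\|$ ensures that at most $\kappa \ell$ indices have $|v^\T Z_j| > 2r$. Comparing the resulting lower and upper bounds on $\E\,\#\{j \in G : \|v\| \le 2,\ |v^\T Z_j| > 2r\}$ yields $|G| \lesssim \kappa \ell$, with constants arranged so that $|G| \le 10\kappa\ell$. Feeding this back through the minimax identity produces weights $u^\ast \in \Delta_{\ell, 10\kappa}$ certifying that $\theta$ is a $(5r, 10\kappa)$-spectral center. The only non-routine points left to verify are the validity of the minimax exchange and the integer rounding in the ``average of the smallest values'' step, both of which are standard.
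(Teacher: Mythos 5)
Your proof is correct, and while it shares the paper's overall skeleton --- the SDP-duality/minimax swap of $\min_{u}$ and $\max_{M}$ in the hard direction, and a weight-counting argument in the easy direction --- the key step of the hard direction is genuinely different. The paper argues by contradiction: given a bad matrix $M = \sum_k \lambda_k u_k u_k^{\T}$, it derandomizes with the Rademacher vector $v = \sum_k \sqrt{\lambda_k}\,\varepsilon_k u_k$, which is \emph{exactly} unit norm (so the combinatorial hypothesis applies directly), but then must invoke Szarek's sharp Khintchine inequality together with the lower tail of the bounded-differences inequality to show that each bad index satisfies $|y_j^{\T} v| \geq r$ with probability at least $0.1$; the constants $5r$ and $10\kappa$ come out of that probability. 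You instead draw $v \sim \mathcal{N}(0, M)$, exploiting Gaussian $2$-stability so that $v^{\T} Z_j$ is exactly $\mathcal{N}(0, a_j)$ and the constant-probability lower bound $\P\left(|v^{\T} Z_j| > 2r\right) \geq \P\left(|\mathcal{N}(0,1)| > 2/5\right) \approx 0.69$ is immediate, at the price that $v$ is no longer on the sphere --- which you absorb via Markov's inequality ($\P(\|v\| \leq 2) \geq 3/4$) and a factor-$2$ loss in radius when renormalizing to $v/\|v\|$. The net comparison: your route avoids Khintchine and concentration inequalities entirely and yields $|G| \lesssim 2.3\,\kappa\ell$, comfortably inside the required $10\kappa\ell$, whereas the paper keeps the test vector exactly unit-norm but pays with heavier tools and a weaker success probability. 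Your formulation is also direct (a counting bound for every fixed $M$, then assembly via minimax) rather than by contradiction, but that is the contrapositive of the same logic. Your easy direction is the paper's pigeonhole argument recast as a weight-budget bound under the cap $u_j \leq 1/\{\ell(1-\varepsilon)\}$, with identical constants, and the integer-rounding and minimax-validity caveats you flag (bilinear objective, both feasible sets convex and compact, so Sion applies) are exactly the ones the paper leaves implicit by citing the SDP duality in \citet{hopkins2020robust}.
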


\cite{hopkins2020robust} state this lemma for some particular constants $\varepsilon$ and $\kappa$. Their proof consists of some arguments of the proof of Proposition~1 in \citet{depersin2019robust}. For the sake of completeness, we  reproduce these arguments in Section~\ref{sec:proof_spectr_comb}, slightly changed.

We deal with both notions of centers for the following reason: it is easier to deal with the statistical properties of combinatorial centers, whereas the spectral centers are more convenient from computational perspective. \cite*{hopkins2020robust} develop an algorithm, namely their Algorithm $1$, that finds a spectral center with the radius that is guaranteed at most (say) twice as large as minimal possible. Namely, we denote the output of their algorithm as $  \mathsf{HLZ}(Y_1, \dots, Y_{\ell}; \varepsilon)$ and they show that the output $\hat{\mu}$  satisfies
\begin{equation}\label{hopkins_guarantee}
    \min_{u \in \Delta_{\ell, \varepsilon}} \left\| \sum_{j = 1}^{\ell} u_{j} (Y_j - \hat{\mu})(Y_j - \hat{\mu})^{\T} \right\| \lesssim \min_{\theta} \min_{u \in \Delta_{\ell, \varepsilon}} \left\| \sum_{j = 1}^{\ell} u_{j} (Y_j - \theta)(Y_j - \theta)^{\T} \right\| \, .
\end{equation}
Hence our goal is to show that with high probability, the true mean is an $(r, \varepsilon)$-spectral center with some fixed $\varepsilon$ and sufficiently small $r$, which we can do using the median-of-means analysis and switching back and forth between spectral and combinatorial centers.

Let us now give the description of the estimator $ \hat{a}_{\delta}(w) $. It consists of the following steps:
\begin{enumerate}
    \item First we  centralize our observations. For this, consider the  transformations $ \widetilde{X}_1 = (X_1 - X_2) / \sqrt{2}, \widetilde{X}_2 = (X_3 - X_4) / \sqrt{2}, \dots $ Obviously, each of these new ``observations'' has zero mean and the same covariance as $X_i$, and moreover they are independent.
    \item Fix $ \varepsilon < 10/21$ and set $ \ell = \lceil 2(\varepsilon / 10)^{-2} \log (2/\delta) \rceil  $. Split the observations $ \widetilde{X}_1, \dots, \widetilde{X}_{\lfloor T/2 \rfloor} $ into $\ell$ non-intersecting buckets \[
        B_1 \sqcup \dots \sqcup B_{\ell} = \{1, \dots, \lfloor T/2\rfloor \} .
    \]
    \item Next, using the data from each of the buckets, we construct the following covariance estimators,
    \[
        \Sigma_{j} = \frac{1}{|B_j|} \sum_{i \in B_{j}} \widetilde{X}_i \widetilde{X}_i \mathbf{1}_{\| \widetilde{X}_i\| \leq D}.
    \]
    \item For a given direction $w \in \R^{N}$, we output the result of the $\mathsf{HLZ}$ algorithm applied to the bucket means
    \[
        \hat{a}_{\delta}(w) \myeq \mathsf{HLZ}(\Sigma_1 w, \dots, \Sigma_{\ell} w; \varepsilon) \, .
    \]
\end{enumerate}

\begin{remark}
Notice that given $ \varepsilon < 10/21$ and fixing, say, $\delta = 0.05$, we have that the number of buckets $ \ell = \lceil 2 (\varepsilon / 10)^{-2} \log(2/\delta) \rceil $ has to be at least $ 1500$, which makes the algorithm rather impractical. Unfortunately, our techniques do not allow more adequate constants. In the empirical study we heuristically choose $ \varepsilon = 1/3 $ and $ \ell = 10 $.
\end{remark}

By considering $ (X_1 - X_2) / \sqrt{2}, (X_3 - X_4) / \sqrt{2}, \dots $ we guarantee that our observations are zero mean without changing the covariance matrix. We have done so by reducing the size of the sample by at most two. To avoid the notation overloading, we assume below that $\mu = 0$ and proceed to work with the original $X_i$.

Set $ Z_i = {X}_i {X}_i^{\T} \mathbf{1}[\| {X}_i\| < D]$.
According to steps 2 and 3 of the algorithm, we split the data into $ \ell $ blocks $ B_1, \dots, B_{\ell} $ and consider the trimmed covariances,
\[
    {\Sigma}_{j} = \frac{1}{|B_j|} \sum_{i \in B_j} Z_i \, .
\]
Below we derive the following bound: with probability $1-\delta$, we have that for any directions $v, w$, the inequality
\begin{equation}\label{unifrom_sigma_mom}
    | u^{\T} {\Sigma}_{j} w - u^{\T} \Sigma w | \lesssim  \Delta_{\Sigma}\myeq \| \Sigma \| \sqrt{\frac{ \mathbf{r}(\Sigma) \log \mathbf{r}(\Sigma) +  \ell}{T}} 
\end{equation}
holds for at least $1 - \kappa$ fraction of the indices $j = 1, \dots, \ell$, where $ \kappa = \varepsilon / 10 $ is fixed. Let us first complete the proof given this inequality.

At Step 5 of the algorithm, we produce the vectors $Y_j = \Sigma_j w $. On the event from \eqref{unifrom_sigma_mom}, we have that for any unit $u \in \R^{N}$, the inequality
\[
    |u^{\T} (Y_j - \Sigma w) | \leq C \Delta_{\Sigma}
\]
holds for at least $ 1-\kappa $ a fraction of indices. Hence, $\Sigma w$ is a $(C \Delta_{\Sigma}, \kappa)$-combinatorial center of $Y_j$. By Lemma~\ref{lem:spectr_comb}, it also means that $\Sigma w$ is a $(5C \Delta_{\Sigma}, 10\kappa)$-spectral center of $Y_j$. Hence, by \eqref{hopkins_guarantee}, the output of $\hat{a}_{\delta}(w) = \mathsf{HLZ}(Y_1, \dots, Y_j; \varepsilon)$ is a $(C' \Delta_{\Sigma}, 10\kappa)$-spectral center, and using the second part of Lemma~\ref{lem:spectr_comb}, we conclude that it is also a $(C' \sqrt{10/\kappa} \Delta_{\Sigma}, 20\kappa)$-combinatorial center. Since $ 21\kappa < 1 $, we get that $ 1 - 20 \kappa + 1 - \kappa > 1$, which means that for any direction $ u \in\R^{d} $, by the pigeonhole principle, we can pick a single $Y_j$ that is close to both combinatorial centers $\Sigma w$ and $ \hat{a}_{\delta}(w) $ in this direction. Therefore, by the triangle inequality,
\[
    | u^{\T} (\hat{a}_{\delta}(w) - \Sigma w) | \lesssim \Delta_{\Sigma},
\]
and since the bound holds in arbitrary direction $u$, we get the required bound in Euclidean norm. 

It remains to prove the bound \eqref{unifrom_sigma_mom}. Let $ \widetilde{\Sigma} = \E Z_i$. We have by Lemma~2.1 of \cite{mendelson2020robust},
\begin{equation}\label{sigma cutoff error}
    \| \Sigma  - \widetilde{\Sigma}\| \lesssim \frac{\| \Sigma \|^2 \mathbf{r}(\Sigma)}{D^2} \, .
\end{equation}
Let $ {Quant}_{\alpha}(z_1 \dots z_\ell)$ of a sequence of real numbers denotes an order statistics $z_{(\lceil \alpha \ell \rceil)}$, where $z_{(1)} \dots z_{(k)}$ is a non-decreasing rearrangement of the original sequence. Then, we can rewrite \eqref{unifrom_sigma_mom} as follows,
\[
    \max\left\{ Quant_{1 - \kappa}(u^{\T} \Sigma_j w) - u^{\T} \Sigma w, u^{\T} \Sigma w - Quant_{\kappa}(u^{\T} \Sigma_j w) \right\} \lesssim \Delta_{\Sigma}\, .
\]
Let us apply \citet[Lemma~2.3]{klochkov2020robust} to the class of functions $ \{ f_{u, w}(Y) = u^{\T} Y w \}$. We have that with probability at least $1 - 2e^{-\kappa^2 \ell / 2}$,
\begin{multline*}
    \max\left\{ Quant_{1 - \kappa}(u^{\T} \Sigma_j w) - u^{\T} \Sigma w, u^{\T} \Sigma w - Quant_{\kappa}(u^{\T} \Sigma_j w) \right\} \\
    \lesssim \E \sup_{u, w} \left( \frac{1}{T} \sum_{i= 1}^{N} \varepsilon_{i} u^{\T} Y_i w \right) + \sqrt{\sup_{u, w} \E (u^{\T} Y_1 w)^{2} \frac{\ell}{T}} + \| \widetilde{\Sigma} - \Sigma \| \, ,
\end{multline*}
where $ \varepsilon_1, \dots, \varepsilon_n$ are independent Rademacher signs, i.e., taking $\pm 1$ with probability $1/2$.  The supremum in both terms of the RHS is over unit vectors $u, w$ in $\R^{N}$. Let us first bound the second, weak term.
We have that
\[
    \E (u^{\T} Y_1 w)^{2} \leq \E (u^{\T} X_1)^2 (w^{\T} X_1)^2 \leq \E^{1/2} (u^{\T} X_1)^4 \E^{1/2} (w^{\T} X_1)^4 \, .  
\]
By the $L_4$--$L_2$
equivalence assumption we get that $ \E^{1/2} (u^{\T} X_1)^{4} \lesssim \E (u^{\T} X_1)^{2} \leq \| \Sigma \| $.
The weak term is therefore bounded by $ C \| \Sigma \| \sqrt{\ell / T} $.

Now let us deal with the first, strong term. We rewrite it as follows,
\[
    \E \sup_{u, w} \left( \frac{1}{T} \sum_{i= 1}^{N} \varepsilon_{i} u^{\T} Y_i w \right) = \E \sup_{u, w} u^{\T} \left( \frac{1}{T} \sum_{i= 1}^{N} \varepsilon_{i} Y_i \right) w = \E \left\| \frac{1}{T} \sum_{i= 1}^{N} \varepsilon_{i} Y_i \right\|.
\]
The right-most expression is the expected value of the norm of a sum of centered matrices $\varepsilon_i Y_i$, which are bounded by $D^2$ pointwise.
We therefore can apply the Matrix Bernstein inequality, the details are carried out by \cite{mendelson2020robust}. They show that this leads eventually to the bound
\[
    \E \left\| \frac{1}{T} \sum_{i = 1}^{T} \varepsilon_i Y_i \right\| \lesssim \| \Sigma \| \sqrt{\frac{\mathbf{r}(\Sigma) \log \mathbf{r} (\Sigma)}{T}} + \frac{D^{2} \log \mathbf{r}(\Sigma)}{T} \, .
\]
Recalling the bound \eqref{sigma cutoff error} we get that, with probability $1 - 2e^{-\kappa^2 \ell / 2}$,
\begin{multline*}
    \max\left\{ Quant_{1 - \kappa}(u^{\T} \Sigma_j w) - u^{\T} \Sigma w, u^{\T} \Sigma w - Quant_{\kappa}(u^{\T} \Sigma_j w) \right\} \\
    \lesssim \Delta_{\Sigma} + \frac{D^{2} \log \mathbf{r}(\Sigma)}{T} + \frac{\| \Sigma \|^{2} \mathbf{r}(\Sigma)}{D^{2}} \, .
\end{multline*}
For $ D \sim \| \Sigma\|^{1/2} \left( \frac{T \mathbf{r}(\Sigma)}{\log \mathbf{r}(\Sigma)}\right)^{1/4}  $ the RHS simplifies to $\Delta_{\Sigma} = \| \Sigma \| \sqrt{\frac{\mathbf{r}(\Sigma) \log \mathbf{r}(\Sigma) + \ell}{T}}$. It remains to notice that $ 1-2e^{-\kappa^2 \ell / 2} \geq 1 - \delta $ as long as $ \ell \geq  2 \kappa^{-2} \log\left(\frac{2}{\delta}\right) $.

\subsection{Proof of Lemma~\ref{lem:spectr_comb}}\label{sec:proof_spectr_comb}

Let us first recall the following basic fact from linear algebra: for a symmetric matrix $A$, its largest eigenvalue satisfies $ \lambda_{\max}(A) = \sup_{M \succeq 0, \Tr(M) = 1} \Tr(MA) $. Hence, we can rewrite
\[
    \min_{w \in \Delta_{\ell, \varepsilon}} \left\| \sum_{j = 1}^{\ell} w_j (Y_j - \theta)(Y_j - \theta)^{\T} \right\| = \min_{w \in \Delta_{\ell, \varepsilon}} \max_{M \succeq 0, \Tr(M) = 1} \sum_{j = 1}^{\ell} w_j (Y_j - \theta)^{\T} M (Y_j - \theta).
\]
The latter can be seen as \emph{semi-definite program} (SDP) and using the strong duality of SDP one can show that the minimum over $w$ and the maximum over $M$ can be swapped (see formula (5.2) in \citet{hopkins2020robust}; see also \citet{depersin2019robust, diakonikolas2020outlier}):
\begin{align*}
    &\min_{w \in \Delta_{\ell, \varepsilon}} \max_{M \succeq 0, \Tr(M) = 1} \sum_{j = 1}^{\ell} w_j (Y_j - \theta)^{\T} M (Y_j - \theta)
    \\
    &= \max_{M \succeq 0, \Tr(M) = 1} \min_{w \in \Delta_{\ell, \varepsilon}} \sum_{j = 1}^{\ell} w_j (Y_j - \theta)^{\T} M (Y_j - \theta).
\end{align*}
The right-hand side form is closer to what \cite{lugosi2019sub} do: for any direction $M = vv^{\T}$ we can pick its own weights. This property allows to show the equivalence.

We write $y_j = Y_j - \theta$ for short everywhere in this section. First, assume that $\theta$ is a $(r, \kappa)$-combinatorial center. We will show by contradiction that it is also a $(\tilde{r}, \varepsilon)$-spectral center, where $\tilde{r} = 5r$ and $\varepsilon = 10\kappa$. Suppose it is not, so that for some $M \succeq 0$ with $\Tr(M) = 1$ we have that
\[
    \min_{w \in \Delta_{\ell, \varepsilon}} \sum_{j} w_j y_j^{\T} M y_j \geq \tilde{r}^2 \, .
\]
If $w \in \Delta_{\ell, \varepsilon}$ delivers the minimum it must put non-zero weights to at least $\lceil \ell(1 - \varepsilon) \rceil $ terms. Since the weights sum up to one, we conclude that for at least $\lceil \ell \varepsilon \rceil$ indices $j = 1, \dots, \ell$, it holds that $y_j^{\T} M y_j \geq \tilde{r}^2 $. We denote this set of indices as $B$. Now, let $  M = \sum_{k = 1}^{\ell} \lambda_k u_k u_k^{\T} $ be its spectral decomposition. Since $ M\succeq 0 $ and $\Tr(M) = 1$, we have that $ \sum_{k} \lambda_{k} = 1  $ and $\lambda_k \geq 0$. 

Let us take a random unit vector $ v = \sum_{k} \sqrt{\lambda_k} u_k \varepsilon_{k} $, where $\varepsilon_{k}$ are independent random signs, so that the equality $\sum_{k} \lambda_k = 1$ ensures that it is indeed a unit vector. Moreover,
\[
    y_j^{\T} v = \sum_{k} (\sqrt{\lambda_k} y_j^{\T} u_k )\varepsilon_{k} = \sum_{k} a_{k}^{(j)}\varepsilon_{k},
\]
where we denote $ a_{k}^{(j)} = \sqrt{\lambda_k} y_j^{\T} u_k $, and we also denote by $a^{(j)} \in \R^{\ell}$ the vector with corresponding coordinates. Observe that for $j \in B$, we have that 
\[
    \| a^{(j)}\|^{2} = \sum_{k} \lambda_{k} y_j^{\T} u_{k} u_{k}^{\T} y_{j} = y_{j}^{\T} M y_{j} \geq \tilde{r}^{2} .
\]
The Khintchin inequality due to \cite{szarek1976best} states that,
\[
    \frac{1}{\sqrt{2}} \| a^{(j)}\| \leq  \E \left| \sum_{k} a_{k}^{(j)} \varepsilon_{j} \right| \leq \| a^{(j)}\| \, .
\]
Furthermore, the lower tail of the bounded differences inequality (Theorem~{6.9} in \citet{boucheron2013concentration}) implies that
\[
    \P \left( \left| \sum_{k} a_{k}^{(j)} \varepsilon_{j} \right| < \frac{1}{\sqrt{2}} \| a^{(j)}\| - t   \right) \leq e^{-\frac{t^2}{2 (\| a^{(j)}\|^{2} + t\| a^{(j)}\|/3) }}.
\]
Taking $ t = \frac{1 - c}{\sqrt{2}} \| a^{(j)}\| $, we get that
\[
\P \left( \left| \sum_{k} a_{k}^{(j)} \varepsilon_{j} \right| \geq \frac{c}{\sqrt{2}} \| a^{(j)}\|  \right) \geq 1 - e^{-\frac{(1 - c)^2}{4(1 + (1-c)/(3\sqrt{2}))}},
\]
which for $c = \sqrt{2} / 5$ is greater than $0.1$. Hence, we can find a unit vector $v$ such that for at least one tenth of the indices $j \in B$,
\[
    | y_j^{\T} v| \geq \frac{1}{5} \tilde{r} = r  .
\]
One tenth of $B$ accounts for $0.1 \varepsilon = \kappa$, hence $ \theta$ cannot be an $(r, \kappa)$-combinatorial center.

Suppose that $\theta$ is a $(r, \varepsilon)$-spectral center. Again we will  prove that it is also a $(\tilde{r}, \kappa)$-combinatorial center by contradiction, with $\tilde{r} = \sqrt{(1-\varepsilon)/\varepsilon} \, r$, $\kappa = 2 \varepsilon$. Suppose it is not. Then, there is a unit vector $v$, such that for strictly more than $ \ell \kappa $ indices $j$, $|y_j^{\T} v| > \tilde{r}$. Denote this set of indices as $B$. Since $\theta$ is a spectral center, we get that for $M = vv^{\T}$,
\[
    \min_{w \in \Delta_{\ell, \varepsilon}} \sum_{j = 1}^{\ell} w_j  | y_j^{\T} v| ^{2} \leq r^2.
\]
The minimum puts {the} weight $1/ (\ell(1-\varepsilon))$ for $ \lfloor \ell(1 - \varepsilon) \rfloor $ indices $j$ with the smallest values $ |y_j^{\T} v| $. By the pigeonhole principle, strictly more than $\ell \kappa - \lceil \ell \varepsilon \rceil$ of them are in the set $B$. Hence, 
\[
    \min_{w \in \Delta_{\ell, \varepsilon}} \sum_{j = 1}^{\ell} w_j  | y_j^{\T} v| ^{2} > \frac{\kappa - \varepsilon}{1 - \varepsilon} \tilde{r}^{2} = \frac{\varepsilon}{1 - \varepsilon}  \tilde{r}^2 = r^{2}.
\]
This completes the proof by contradiction.

\section{Proof of Lemma~\ref{well_cond_convergence}}

Let us first calculate $w^{*}$ explicitly. Since $-M_{\lambda}(w; \Sigma, \mu)$ is strongly convex, and adding a Lagrangian multiplier $ -l(w^{\T}\mathbf{1} - 1) $ corresponding to the restriction $w^{\T}\mathbf{1} = 1$, we have that $w^*$ is the solution to
\[
    -\mu + \gamma \Sigma w - l \mathbf{1} = 0 \qquad \Rightarrow \qquad
    w = \frac{1}{\gamma} \Sigma^{-1} (\mu + l \mathbf{1})\,.
\]
Since $w^{\T} \mathbf{1} = 1$ we find that $ l = (\gamma - \mathbf{1}^{\T}\Sigma^{-1} \mu) / (\mathbf{1}^{\T} \Sigma^{-1} \mathbf{1}) $. Therefore,
\[
    w^{*} = \gamma^{-1} \Sigma^{-1} \mu + \frac{1 - \gamma^{-1} \mathbf{1}^{\T}\Sigma^{-1} \mu}{\mathbf{1}^{\T} \Sigma^{-1} \mathbf{1}} \Sigma^{-1}\mathbf{1}  \, .
\]

Denote $ \Pi_{0} = I - N^{-1} \mathbf{1}\mathbf{1}^{\T}$ the orthogonal projector onto the subspace of $\{ w: \; w^{\T} \mathbf{1} = 0 \} $, so that $ \Pi_{1}(x + y) = \Pi_{1}x + \Pi_{0} y $.  It is straightforward to check that $ \Pi_{0}(\gamma \Sigma w^{*} - \mu) $ vanishes, which is all we need to know about $w^{*}$ for the remaining of the proof.

Write $ \Delta(w) = \gamma {(}\Sigma w - \hat{a}(w){)}  - (\mu  - \hat{\mu}) $. Then, 
\begin{align*}
    w_{s + 1} - w^{*} &= w_{s} - w^{*} - \eta \Pi_0 [\gamma \Sigma w_s - \mu] + \eta \Pi_{0} \Delta(w_s) \\
    &= \left({I} - \eta\gamma \Pi_{0} \Sigma \right) (w_{s} - w^{*} ) + \eta \Pi_{0} \Delta(w_s) \\
    &= \Pi_{0}({I} -  \eta\gamma \Sigma)  (w_s - w^{*}) + \eta \Pi_{0} \Delta(w_s) \, ,
\end{align*}
{where since $ \eta \leq 1 / (\gamma\lambda_{\max}) $}, we have that ${I}- \eta\gamma \Sigma $ is positive definite, and $ \|{I} - \eta \gamma \Sigma\| = 1 - \eta \gamma \lambda_{\min} $. In {addition}, due to the requirement of the theorem for the estimators $\hat{\mu}$ and $\hat{a}(w)$, we have that
\[
    \| \Delta(w)\| \leq \Delta_{\mu} + \gamma \Delta_{\Sigma} \| w\| \, .
\]
Denoting $ \delta_s = \| w_s - w^{*}\| $, we have the recursive inequality,
\[
    \delta_{s + 1} \leq (1 - \eta \gamma \lambda_{\min}) \delta_{s} +  \eta \Delta_\mu + \eta\gamma \Delta_\Sigma \| w_{s}\| \, .
\]
We can link $ \| w_{s}\| $ to $\delta_s$ through a simple triangle inequality $ \| w_s\| \leq \| w^{*}\| + \delta_s $. We obtain,
\[
    \delta_{s + 1} \leq \fixaccepted{(}1 - \eta\gamma(\lambda_{\min} - \Delta_\Sigma)\fixaccepted{)} \delta_{s} + \eta \Delta_\mu +  \eta \gamma \Delta_{\Sigma} \| w^{*}\| \, .
\]
Denoting $ \kappa = 1 - \eta \gamma(\lambda_{\min} - \Delta_\Sigma) < 1 $ and $ x = \eta \Delta_\mu +  \eta \gamma \Delta_{\Sigma} \| w^{*}\| $, we expand our recursive inequality as follows,
\begin{align*}
    \delta_{s + 1} &\leq \kappa \delta_s + x \leq \kappa^2 \delta_{s-1} + \kappa x + x  \leq \kappa^{s + 1} \delta_{0} + (\kappa^{s} + \dots + 1) x \\
    & < \kappa^{s + 1} \delta_{0} + \frac{x}{1 - \kappa} \, .
\end{align*}
Substituting $\kappa$ and $x$ back, we obtain the result.

\subsection{Proof of {Theorem}~\ref{well_cond_main_res}}\label{proof_well_cond_main_res}

Simply substitute $ \Delta_{\mu} = C \| \Sigma \|^{1/2} \sqrt{\frac{\mathbf{r}(\Sigma) + \log(1/\delta)}{T}} $ and $ \Delta_{\Sigma} = C \| \Sigma \| \sqrt{\frac{\mathbf{r}(\Sigma) \log \mathbf{r}(\Sigma) + \log(1/\delta)}{T}} $ into Lemma~\ref{well_cond_convergence}, and take $ \eta = 1/(\gamma\lambda_{\max}) $. {Here $C > 0$ is an appropriately chosen absolute constant.} The condition \eqref{T_lower_bound} ensures that $ \Delta_{\Sigma} \leq \lambda_{\min}/2 $. We get that
\[
    \| w_{s} - w^{*} \| \leq \left( 1 - \frac{1}{2\kappa}\right)^{s} \| w_{0} - w^{*}\| + C' \frac{{\gamma^{-1}}\| \Sigma \|^{1/2} + \| \Sigma \| \| w^{*}\| }{\lambda_{\min}} \sqrt{\frac{\mathbf{r}(\Sigma) \log \mathbf{r}(\Sigma) + \log(1/\delta)}{T}}.
\]
Taking $ s \sim \log T $ steps, the first term will be dominated by second one. Furthermore, since the objective is quadratic and $w^{*}$ {is} its optimum, we have that
\begin{align*}
    M_{\gamma}(w^{*};\Sigma, \mu) - M_{\gamma}(w_{s};\Sigma, \mu) &= \frac{\gamma}{2} \| \Sigma^{1/2} (w_s - w^{*})\|^{2} \\
    & \lesssim 
    \frac{{\gamma^{-1}} \| \Sigma\|^{2} + {\gamma} \|\Sigma\|^{3} \|w^*\|^2}{\lambda_{\min}^{2}}\cdot \frac{\mathbf{r}(\Sigma)\log \mathbf{r}(\Sigma) + \log(1/\delta)}{T},
\end{align*}
hence the bound {follows}.

\section{Proof of Lemma~\ref{lemma_ill_cond}}
When $\Sigma$ is invertible, it is well known that the true minimum of the risk is $w^{*} = \Sigma^{-1} \mathbf{1} / (\mathbf{1}^{\T} \Sigma^{-1} \mathbf{1})$, so that $ \Pi_{0} \Sigma w^* $ vanishes. When $ \Sigma $ is not invertible, it is straightforward to check that $ \Pi_{0} \Sigma w^{*} = 0$ for any $ w^{*} \in \mathrm{Argmin}_{w^{\T} 1 = 1} R(w; \Sigma) $. Below we assume that $ w^{*} $ is some fixed optimal solution.

Write {$ \Delta(w) = \Pi_{0} [ \hat{a}(w) - \Sigma w] $}, so that the updates take the form $ w_{s + 1} = w_{s} - \eta \Pi_{0} \Sigma w_{s} - \eta \Delta(w_s)$. We first show that,
\begin{equation}\label{wsp1_ws_ineq}
    \| w_{s + 1} - w^{*} + \eta \Delta(w_{s}) \| \leq \| w_{s} - w^{*} \| \, .
\end{equation}
Using $ \Pi_{0}\Sigma w^{*} = 0$, we have
\begin{align*}
    w_{s + 1} - w^{*} + \eta \Delta(w_s)  &=  w_{s} - w^{*} - \eta \Pi_0 \Sigma w_s \\
    &=  w_{s} - w^{*} + \eta \Pi_{0} \Sigma w^{*} - \eta \Pi_0 \Sigma w_s \\
    & = (I - \eta \Pi_{0} \Sigma \Pi_{0}) (w_s - w^{*}) ,
\end{align*}
where for the last line we used the fact that $ (w_s - w^{*}) = \Pi_{0} (w_s - w^{*}) $, since both $w^*$ and $w_s$ sum up to one.
Using the condition $ \eta \leq 1/\| \Sigma \| $, the matrix $I - \eta \Pi_{0} \Sigma \Pi_{0}$ is a contraction, so the inequality \eqref{wsp1_ws_ineq} follows.

Applying further the triangle inequality, we have that
\begin{align*}
    \| w_{s + 1} - w^{*} \|  &\leq \| w_{s} - w^{*} \| + \| \Delta(w_s) \| \leq (1 + \eta \Delta_{\Sigma}) \| w_s - w^{*} \| + \eta \Delta_{\Sigma} \| w^{*} \| \\
    & \leq \dots \\
    & \leq (1 + \eta \Delta_{\Sigma})^{s+1} \left( \| w_0 - w^{*}\| + s \eta \Delta_{\Sigma} \| w^*\| \right).
\end{align*}
Let $n$ be an integer and
assume that $ n \eta \Delta_{\Sigma} \leq 1 $. Then, using the inequality $ (1 + 1/n)^n \leq e$, for each $s = 0, 1, \dots, n$, {we have}
\begin{equation}\label{wtws_bounded}
    \max(\| w_s\|, \| w_{s} - w^{*} \|) \leq (e + 1) \left( \| w_0 - w^{*}\| + \| w^{*} \| \right) \myeq M \, .
\end{equation}
Further, we apply a standard trick {in optimization;} see e.g., Theorem~{3.5} in \citet{bubeck2014convex}. Let us denote $ R^{*}(w) = R(w; \Sigma)$, which is a convex and $ \| \Sigma \| $-smooth function. Therefore, it holds that for any $u, w$,
\begin{equation}\label{convex_smooth}
    0 \leq R^{*}(u) - R^{*}(w) - \nabla R^{*}(w) (u - w) \leq \frac{\| \Sigma \|}{2} \| u - w \|^{2} \, .
\end{equation}
Applying this inequality for $w_s$ and $ w_{s + 1} = w_{s} - \eta \Pi_{0} \Sigma w_{s} - \eta \Delta(w_s) $, we first obtain that
\begin{align*}
    R^{*}(w_{s + 1}) - R^{*}(w_s) \leq & -\eta (\Sigma w_{s})^{\T} \left[ \Pi_{0} \Sigma w_{s} + \Delta(w_s)  \right] + \frac{\eta^{2} \| \Sigma \|}{2} \left \| \Pi_{0} \Sigma w_{s} + \Delta(w_s)  \right\|^{2} \\
    \leq & -\eta (\Sigma w_{s})^{\T} \left[ \Pi_{0} \Sigma w_{s} + \Delta(w_s)  \right] + \frac{\eta}{2} \left \| \Pi_{0} \Sigma w_{s} + \Delta(w_s)  \right\|^{2} \\
    = & - \frac{\eta}{2} \| \Pi_{0} \Sigma w_{s} \|^{2}  + \frac{\eta}{2} \| \Delta(w_s) \|^{2} .
\end{align*}
Observe that due to \eqref{convex_smooth},
\[
    R^*(w_s) - R^*(w^*) \leq (\Sigma w_s)^{\T}(w_s - w^*) \leq \| \Pi_0 \Sigma w_s \| \| w_s - w^{*}\| \leq M \| \Pi_0 \Sigma w_s \|,
\]
where in the last inequality we also use the bound \eqref{wtws_bounded}. Furthermore, $ \| \Delta(w_s)\| \leq \Delta_{\Sigma} M $. Denoting $ \delta_{s} = R^{*}(w_s) - R^{*}(w^*)$, we obtain the recursive inequality,
\[
    \delta_{s + 1} \leq \delta_{s} -\frac{\eta}{4M^{2}} \delta_s^{2} + \eta \Delta_{\Sigma}^{2} M^{2} \, .
\]
Denoting additionally $ \alpha_{s} = \max\{0, \delta_{s} - s \eta \Delta_{\Sigma}^{2} M^{2}\} $, we can easily derive that $\alpha_{s + 1}  \leq \max\left\{ 0, \alpha_{s} - \frac{\eta}{4M^2} \alpha_{s} \right\} $. It is straightforward to check that $ \alpha_{0} \leq \|\Sigma\| M^2 \leq \frac{4M^2}{\eta}$ and $ \alpha_{s + 1} \leq \alpha_{s} $. Therefore, we conclude that we can drop the positive part, so that $ \frac{1}{\alpha_{s}} \leq \frac{1}{\alpha_{s + 1}} - \frac{\eta}{4M^2} \frac{\alpha_s}{\alpha_{s + 1}} \leq \frac{1}{\alpha_{t + 1}} - \frac{\eta}{4M^2} $. Hence, the bound $ \frac{1}{\alpha_{t}} \geq \frac{\eta}{4M^2} t $ follows . Therefore, the following inequality holds
\[
    R^{*}(w_t) - R^{*}(w^*) = \delta_{t} \leq \alpha_{t} + t \eta \Delta_{\Sigma}^{2} M^{2} \leq \frac{4M^2}{\eta t} + t \eta \Delta_{\Sigma}^{2} M^{2} \, ,
\]
which completes the proof.

\section{Weights visualization}\label{sec:vizu}
\begin{figure}[H]
    \centering
    \includegraphics[width=1.0\textwidth]{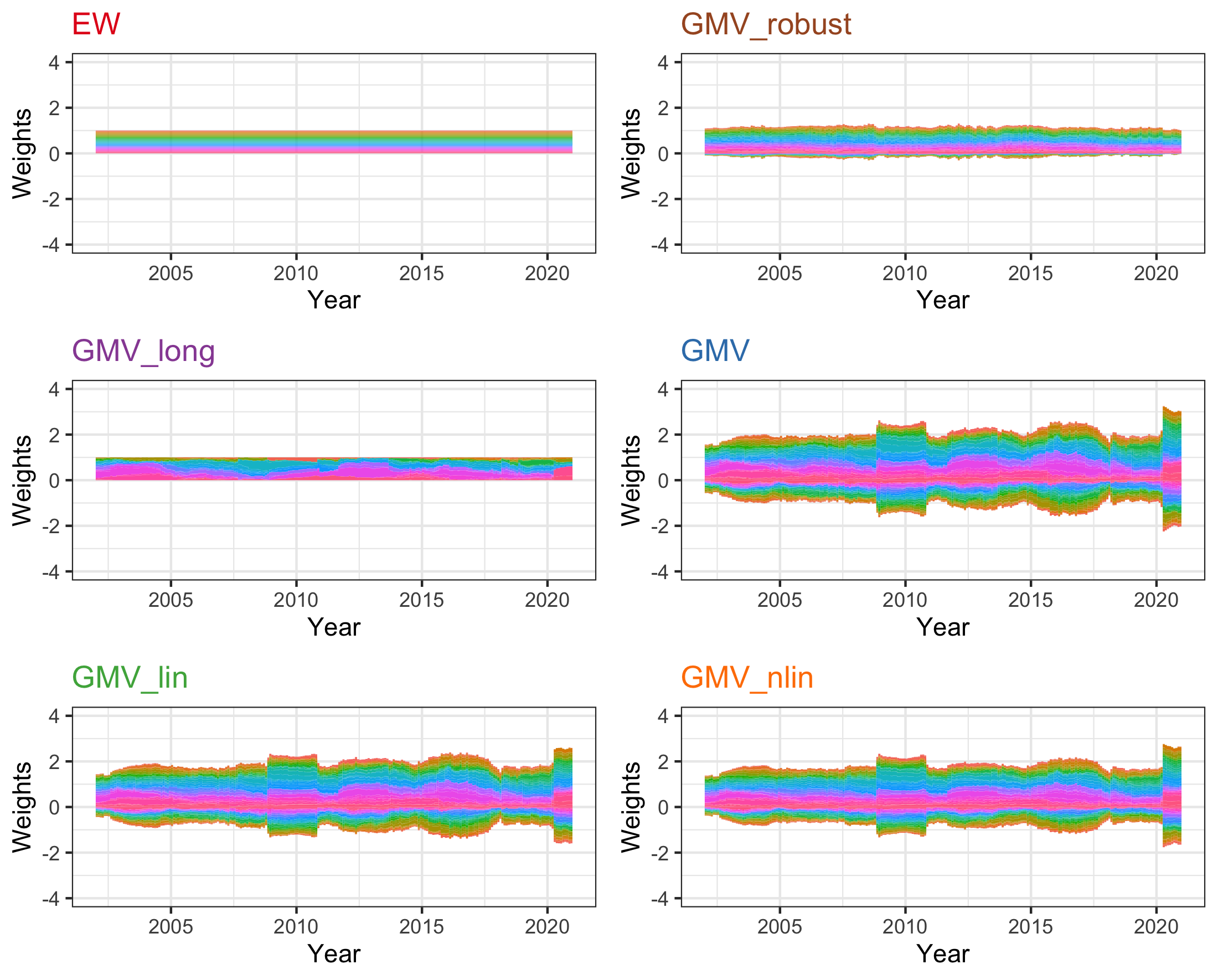}
    \caption{Weights of assets in GMV portfolios, 81 S\&P100 constituents, 20020101 - 20201231}
    \label{Weights_SP}
      \quantlet{RobustM_PerformanceSP100}{RobustM\_PerformanceSP100}
\end{figure}

 \begin{figure}[H]
    \centering
    \includegraphics[width=1.0\textwidth]{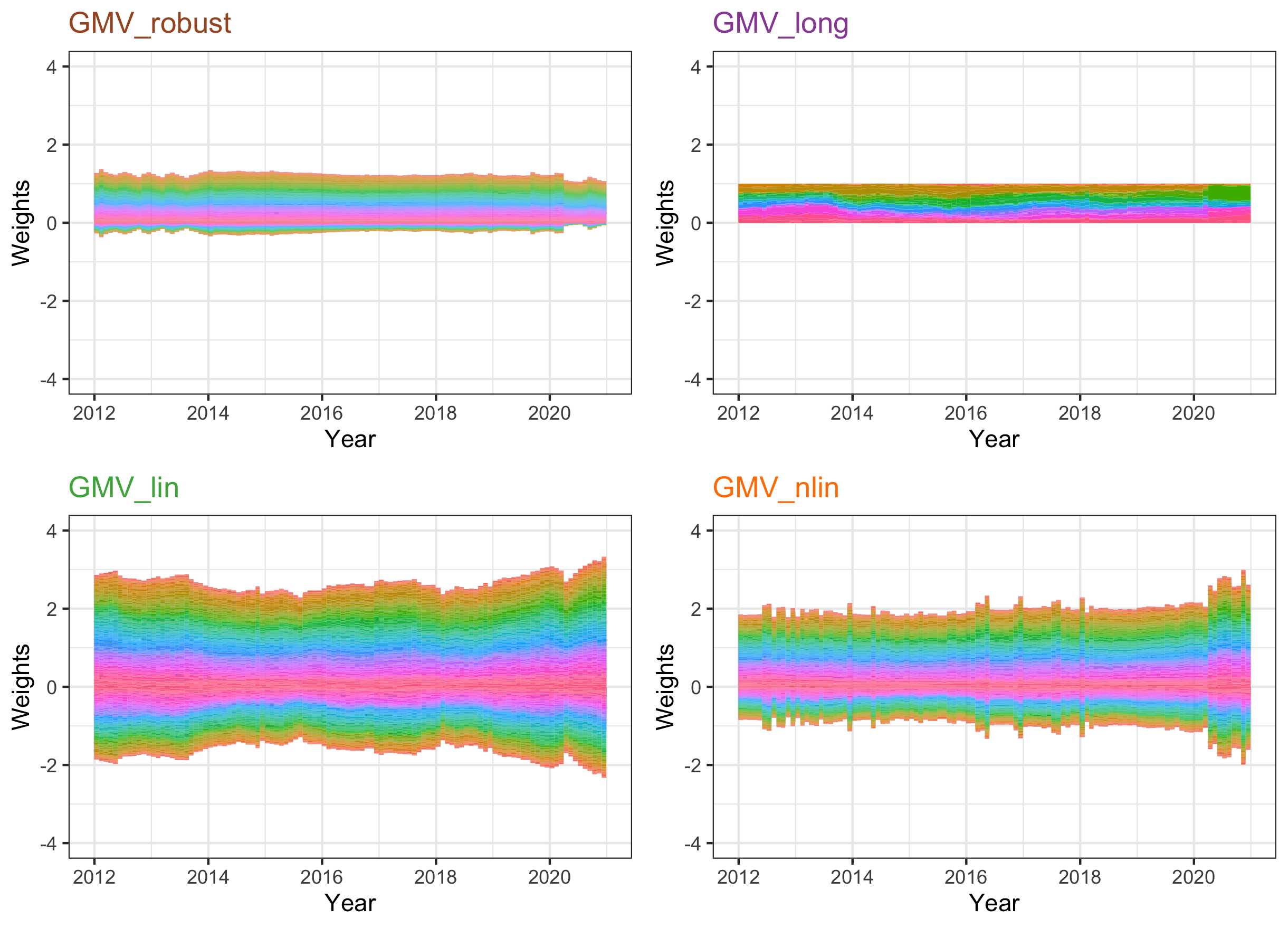}
   \caption{Weights of assets in GMV portfolios, 600 Russell3000 constituents, 20120101 - 20201231}
    \label{Weights_Russell}
     \quantlet{RobustM_PerformanceRussell3000}{RobustM\_PerformanceRussell3000}
\end{figure}
\section{Performance results with window length \texorpdfstring{$T = 252$}{Lg}}
\label{sec:perf}
\tab{Performance_SP1001}{\footnotesize 
{ 
Out-of-sample performance measures of benchmark portfolios, 81 stocks of S\&P100, monthly rebalancing: TTO, target turnover; TO, Turnover;  CW, cumulative Wealth; SD, standard deviation; SR, Sharpe ratio; CR, Calmar ratio. All estimates are obtained from daily values (5031 out-of-sample returns) for both gross and net returns. Time period: 20010101 - 20201231.The difference tests for the turnovers are obtained from a sample $t$-test. The difference test on the Sharpe ratio and Variance used the approach of \cite{ledoit2008robust} and \cite{ledoit2011robust}, for which the R code is available at M. Wolf's website.
$p$-values are reported in parentheses with respect to the GMV\_{robust} portfolio.}
\quantlet{RobustM_PerformanceSP100}{RobustM\_PerformanceSP100}}

\tab{Performance_Russell30001}{\footnotesize 
{Out-of-sample performance measures of benchmark portfolios, 600 stocks of Russell3000, monthly rebalancing and : TTO, target turnover; TO, Turnover, TC, transaction costs;  CW, cumulative Wealth; SD, standard deviation; SR, Sharpe ratio; CR, Calmar ratio. All estimates are obtained from daily values (2515 out-of-sample returns) for both gross and net returns. Time period: 20110101 - 20201231. The difference tests for  turnovers are obtained from a sample $t$-test. The difference test on the Sharpe ratio and Variance used the approach of \cite{ledoit2008robust} and \cite{ledoit2011robust}, for which the R code is available at  M. Wolf's website. $p$-values are reported in parentheses with respect to the GMV\_{robust} portfolio.} 
\quantlet{RobustM_PerformanceRussell3000}{RobustM\_PerformanceRussell3000}}

\end{document}